\documentclass[sigconf]{acmart}
%\documentclass{sig-alternate-05-2015}

% Copyright
\setcopyright{none}

%\makeatletter
%\def\@copyrightspace{\relax}
%\makeatother

\usepackage{booktabs} % For formal tables
\usepackage{amsmath}
\usepackage{array}
\usepackage{longtable}
\usepackage{makecell}
\usepackage{epstopdf}
\usepackage{float}
\usepackage{multirow}
\usepackage{balance}
\usepackage{bm}
\usepackage{algorithm}
\usepackage{algorithmic}
\usepackage{makecell}
\usepackage{xfrac}
\usepackage{mathtools}
\usepackage{pdfpages}
\usepackage{url}
\usepackage{graphicx}
\usepackage{caption}
\usepackage{subcaption}
\usepackage[flushleft]{threeparttable}

\DeclarePairedDelimiterX{\infdivx}[2]{(}{)}{%
  #1\;\delimsize\|\;#2%
}

\DeclareMathOperator*{\argmax}{arg\,max}
\clubpenalty=10000 
\widowpenalty=10000

\newlength\myindent
\setlength\myindent{2em}
\newcommand\bindent{%
  \begingroup
  \setlength{\itemindent}{\myindent}
  \addtolength{\algorithmicindent}{\myindent}
}
\newcommand\eindent{\endgroup}

\newcommand*\rotv{\rotatebox{90}}

\newcolumntype{L}{>{\centering\arraybackslash}m{2.2cm}}
\newcommand*\rfrac[2]{{}^{#1}\!/_{#2}}

\begin{document}

\title{HoloScope: Topology-and-Spike Aware Fraud Detection}
\author{Shenghua Liu,$^{1,2,*}$ Bryan Hooi,$^{2}$ Christos Faloutsos$^{2}$}
\authornote{The work was done when Shenghua Liu was a visiting researcher at
Carnegie Mellon University}
\affiliation{{{$^{1}$}CAS Key Laboratory of Network Data Science \& Technology,\\
Institute of Computing Technology, Chinese Academy of Sciences}\\
{{$^{2}$}Carnegie Mellon University}}
\email{liu.shengh@gmail.com, {bhooi, christos}@cs.cmu.edu}

\begin{abstract}
    As online fraudsters invest more resources, including purchasing large pools of fake user accounts and dedicated IPs, fraudulent attacks become less obvious and their detection becomes increasingly challenging. Existing approaches such as average degree maximization suffer from the bias of including more nodes than necessary, resulting in lower accuracy and increased need for manual verification. Hence, we propose HoloScope, which uses information from graph topology and temporal spikes to more accurately detect groups of fraudulent users. In terms of graph topology, we introduce ``contrast suspiciousness,'' a dynamic weighting approach, which allows us to more accurately detect fraudulent blocks, particularly low-density blocks. In terms of temporal spikes, HoloScope takes into account the sudden bursts and drops of fraudsters' attacking patterns. In addition, we provide theoretical bounds for how much this increases the time cost needed for fraudsters to conduct adversarial attacks. Additionally, from the perspective of ratings, HoloScope incorporates the deviation of rating scores in order to catch fraudsters more accurately. Moreover, HoloScope has a concise framework and sub-quadratic time complexity, making the algorithm reproducible and scalable. Extensive experiments showed that HoloScope achieved significant accuracy improvements on synthetic and real data, compared with state-of-the-art fraud detection methods. 

% Fraud detection is facing more challenges when 
% the fraudsters invest more resources, like user
% accounts and dedicated IPs, to make their
% connections less dense than ever before.
% However, existing work that aimed at finding the connections of 
% maximum average degree density, 
% suffered a bias of
% including more nodes than the necessary,
% resulting in lower accuracy and intensive manual verification.
% Therefore, we propose HoloScope method to detect
% topology and spike suspiciousness simultaneously.
% First on the topology perspective, we introduce a ``contrast suspiciousness''
% to dynamically adjust the weights of nodes, which
% can achieve a better accuracy especially to detect a low-density
% fraudulent connections.
% Second on the timing, our HoloScope can honor the 
% sudden burst and drop of fraudsters' attacking time
% on the targets, which in fact increases the time cost of 
% the fraudsters' adversarial attacks by a theoretical
% bound. 
% On the rating perspective, the HoloScope
% considers the deviation of rating scores into 
% our dynamic contrast suspicious.
% Moreover, the HoloScope has a concise framework and sub-quadratic 
% time complexity, which make our algorithm reproducible and scalable.
% Overall, the HoloScope achieved significant accuracy 
% improvements on both synthetic and real data,
% compared with the state-of-the-art fraud detection
% methods.

\end{abstract}

\keywords{Graph Mining, Fraud Detection, Burst and Drop}
%Contrast Suspiciousness,

\maketitle

\section{Introduction}
Online fraud has become an increasingly serious problem due to the high profit it offers to fraudsters, which can be as much as \$5 million
from 300 million fake ``views'' per day,
according to a report of Methbot~\cite{Methbot} on
Dec 2016.
Meanwhile, to avoid detection, fraudsters can manipulate their geolocation, 
internet providers, and IP address, via large IP pools (852,992 dedicated IPs). 
Suppose a fraudster has $a$ accounts or IPs, and wants to rate or click $b$
objects (e.g., products) at least 200 times, as required by their customers.
The edge density of the fraudulent block is then 
% $\sfrac{(b \times 200)}{(a \times b)}$ $=$
% $\sfrac{200}{a}$.
$200\cdot b / (a \cdot b) = 200 / a$.
We thus see that with enough user accounts or IPs, 
the fraudsters can serve as many customers as they need while keeping density low. 
This presents a difficult challenge for most existing fraud detection methods. 
%\begin{equation}
%    \nonumber
%    density = \frac{b \times 200}{a \times b} = \frac{200}{a}.
%\end{equation}
% We can see that the volume density of fraudulent connections 
% is only related to the number of fraudsters,
% and they can serve as many customers as possible while keeping
% the same density.
%does not related to the number of objects, that is to say, 
%they can theoretically 
%fraud infinite number of objects, keeping a density in adjacency
%matrix.
% We call the density of fraudulent connections as
% ``fraudulent density''.
% Due to the high profit, 
% the fraudsters can invest more accounts and IPs
% to decrease the fraudulent density.
%investment return, which decreases the density.
%the controlled accounts or IPs have the property of
%an asset, which are considered as an investment in  
%finance. Fraudsters can enlarge the scale of controlled
%accounts and IPs gradually according to their revenue and 
%investment return, which decreases the density.
%For example, if having 2000 accounts or IPs, the density can keep 
%as low as 0.1; and moreover, if having more than 10,000, 
%the density will be 0.02.
% Therefore,
% plus the camouflage, such a low fraudulent density 
% challenges most of the existing fraud 
% detection methods.
% Directly detecting fraud by volume density makes the
% algorithms easily trapped into a trivial solution, namely,
% a $1\times 1$ block. 

Current dense block 
detection methods~\cite{charikar2000greedy,shin2016mzoom,shin2017dcube}  
maximize the arithmetic or geometric average degree.
We use ``fraudulent density'' to indicate the edge density that 
fraudsters create for the target objects.
However, those methods have 
a bias of including more nodes than necessary,
especially as the fraudulent density decreases, which we verified empirically.
This bias results in low precision,
which then requires intensive manual work to verify each user.
Fraudar~\cite{hooi2016fraudar} proposed an edge weighting scheme based on inverse logarithm 
of objects' degrees to reduce this bias, which was inspired by 
IDF~\cite{sparck1972statistical,robertson2004understanding}. 
However, their weighting scheme is fixed globally and affects
both suspicious and normal edges, lowering the precision of Fraudar,
which can be seen from results 
on semi-real (with injected labels) and real
data (see Fig.~\ref{figcmp}).
%In (a), the upper left rectangular is 
%the most suspicious and dense block. 
%Although comparing with the ``average degree'' method (red box),
%Fraudar (green box) improves the accuracy, but the accuracy 
%is still not significant.

%\subsection{Motivation}
\begin{figure*}
\centering
    \begin{subfigure}[b]{0.33\textwidth}
	\centering
	{\includegraphics[height=1.8in]{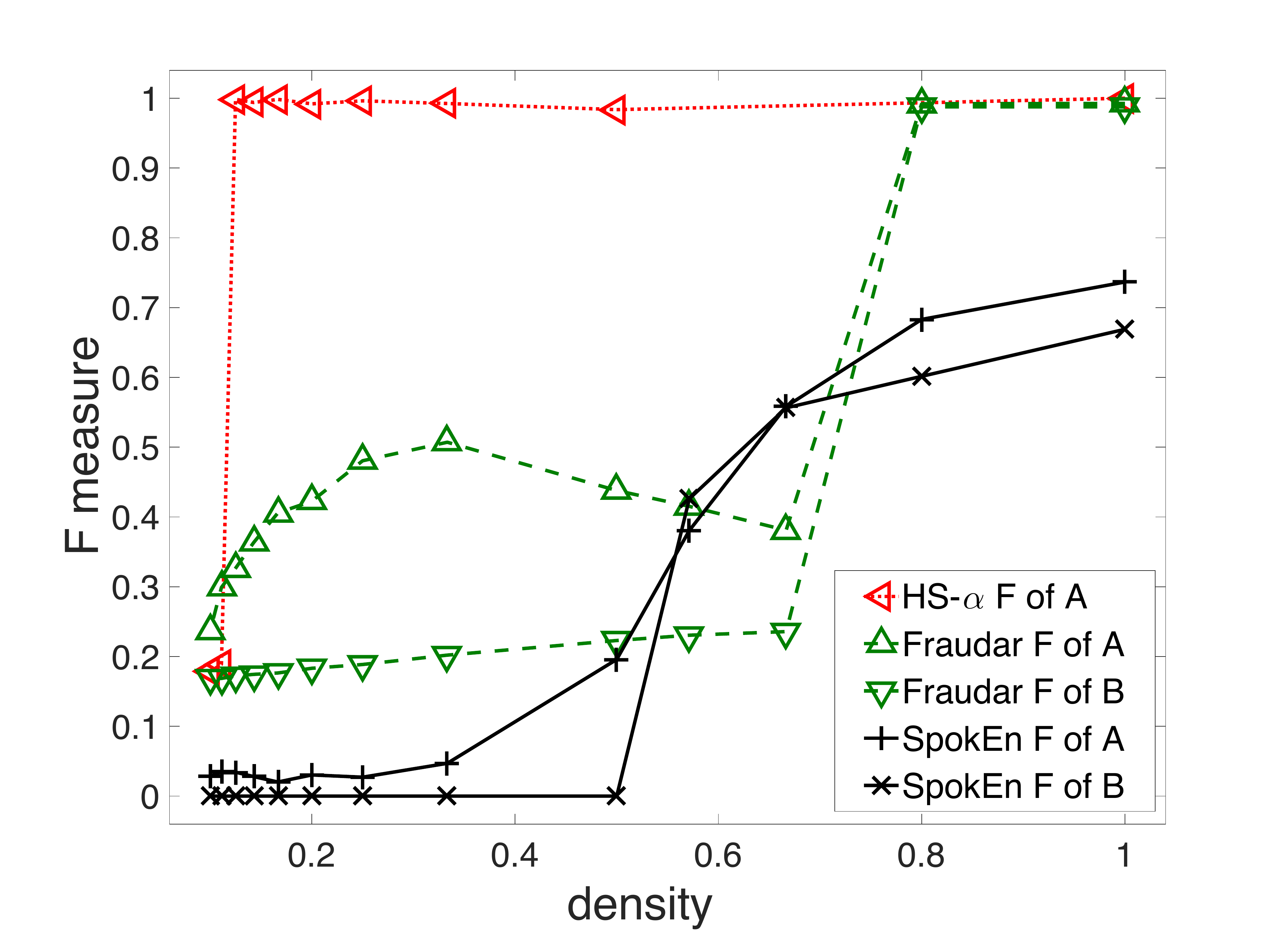}}
	\caption{HS-$\alpha$ using topology information outperforms
	competitors}
	\label{sfigconncmp}
    \end{subfigure}
    \begin{subfigure}[b]{0.33\textwidth}
	\centering
	{\includegraphics[height=1.8in]{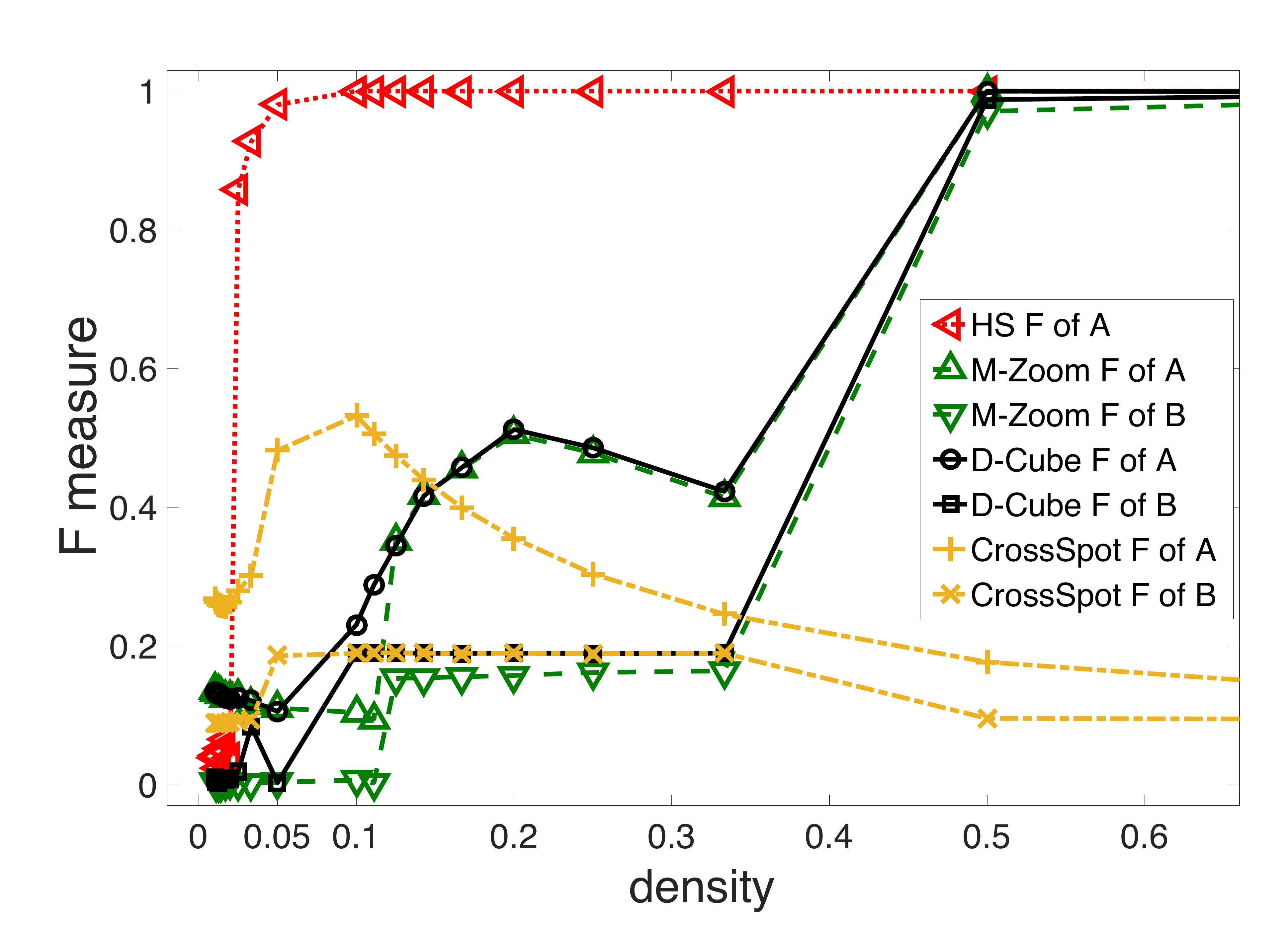}}
	\caption{HS using holistic attributes provides clear improvement, and
	performs the best.}
	\label{sfigallcmp}
    \end{subfigure}
    \begin{subfigure}[b]{0.3\textwidth}
	\centering
	\includegraphics[height=1.8in]{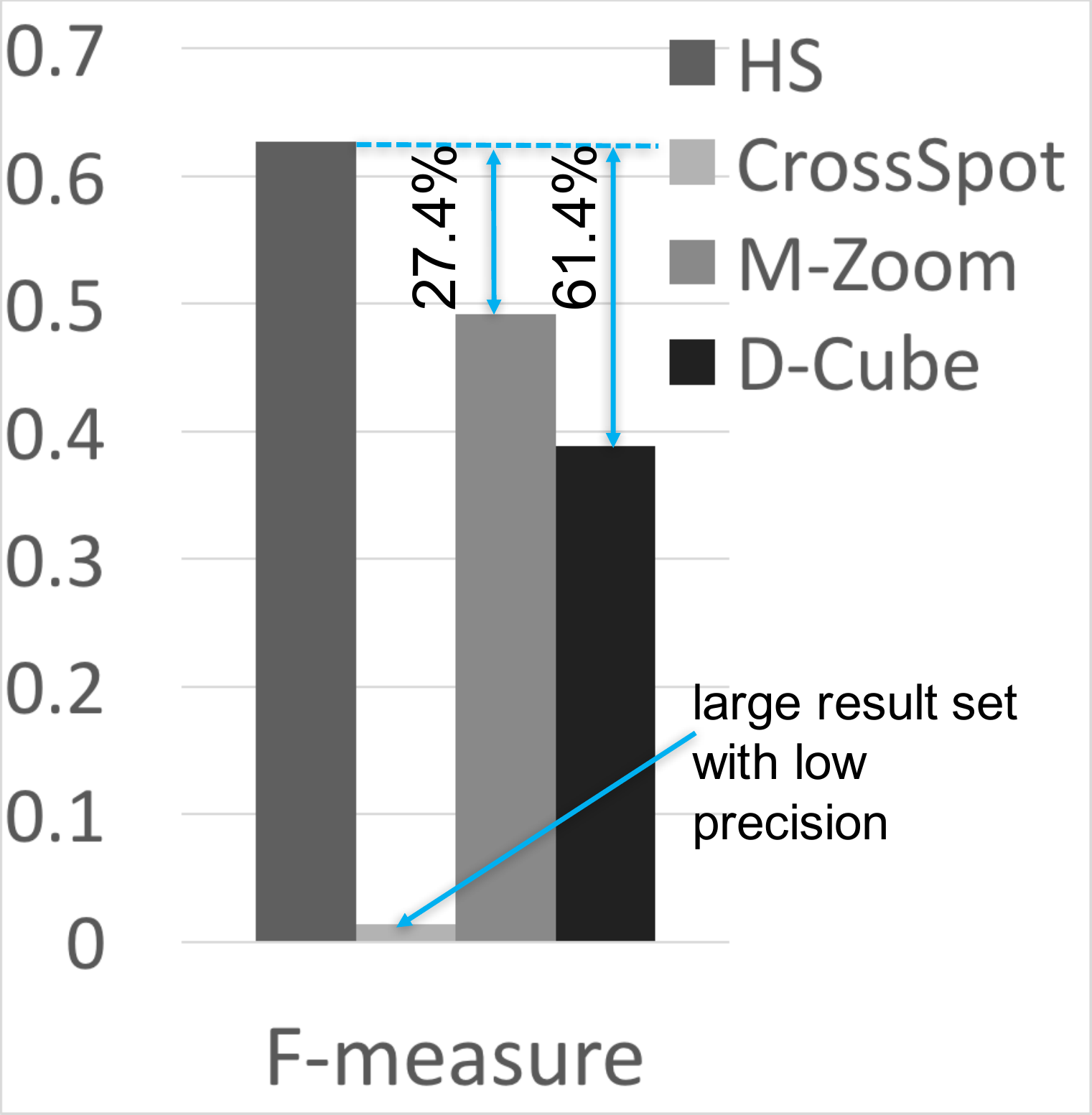}
	\caption{HS achieves the best F-measure, on real data from Sina Weibo}
	\label{sfigwbexp}
    \end{subfigure}
    \caption{(a) and (b) show experimental results on BeerAdvocate dataset. 
    HS-$\alpha$ and HS are both our methods, where the
    former only uses topology information. 
    We increase the \# of injected fraudsters from 200 to 2000 for HS-$\alpha$, and
    to 20000 for HS. The corresponding density is shown on the horizontal axis. 
    %Since HS-$\alpha$ and HS return suspiciousness scores, 
    %we measure accuracy using AUC (area under the ROC curve).
    (c) shows accuracy results on Sina Weibo, with ground truth labels.}
\label{figcmp}
\end{figure*}

\begin{figure}
    \centering
    \begin{subfigure}[b]{0.23\textwidth}
	\centering
	{\includegraphics[height=1.2in]{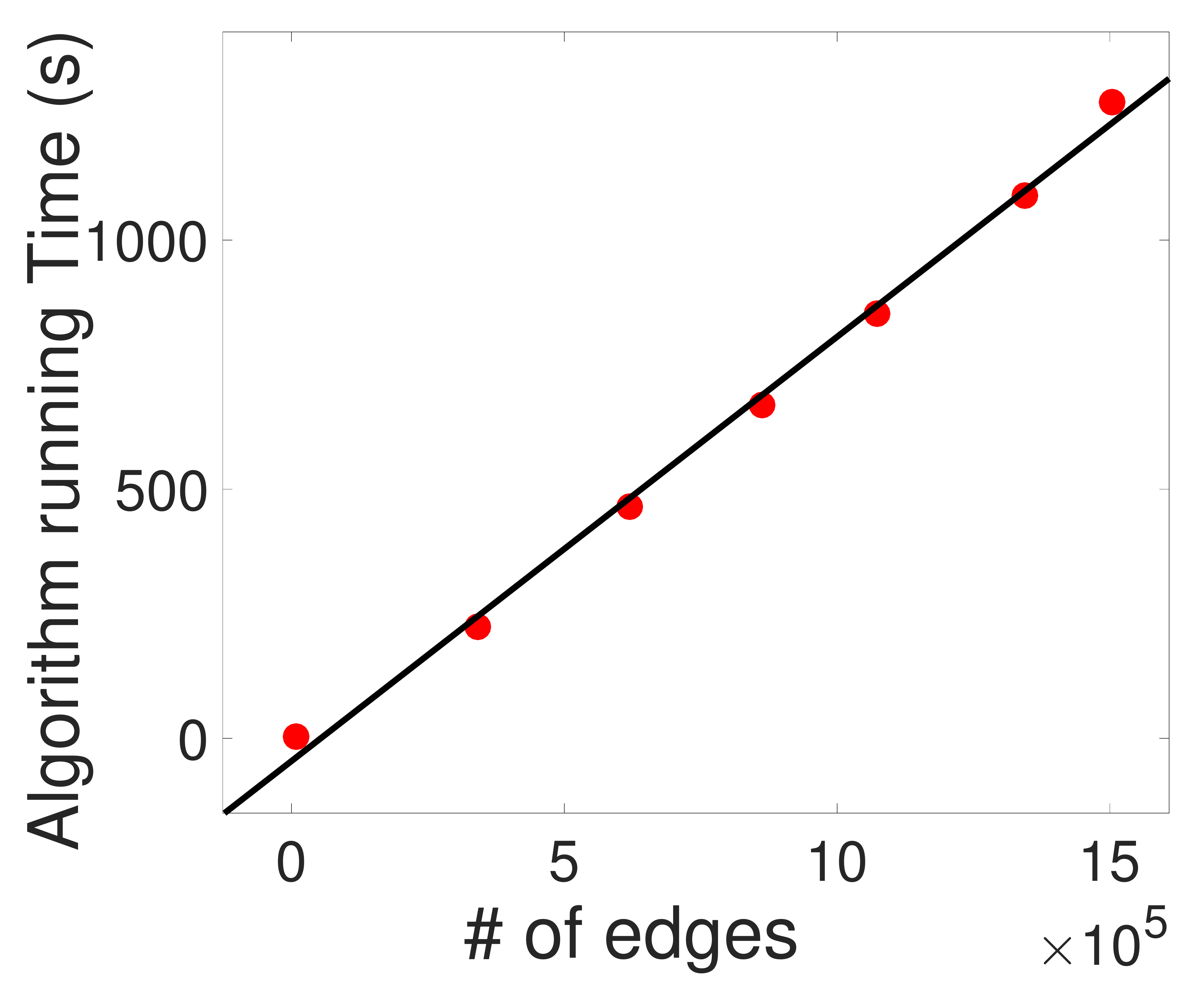}}
	\caption{BeerAdvocate dataset}
	%\label{sfigconncmp}
    \end{subfigure}
    \begin{subfigure}[b]{0.23\textwidth}
	\centering
	{\includegraphics[height=1.3in]{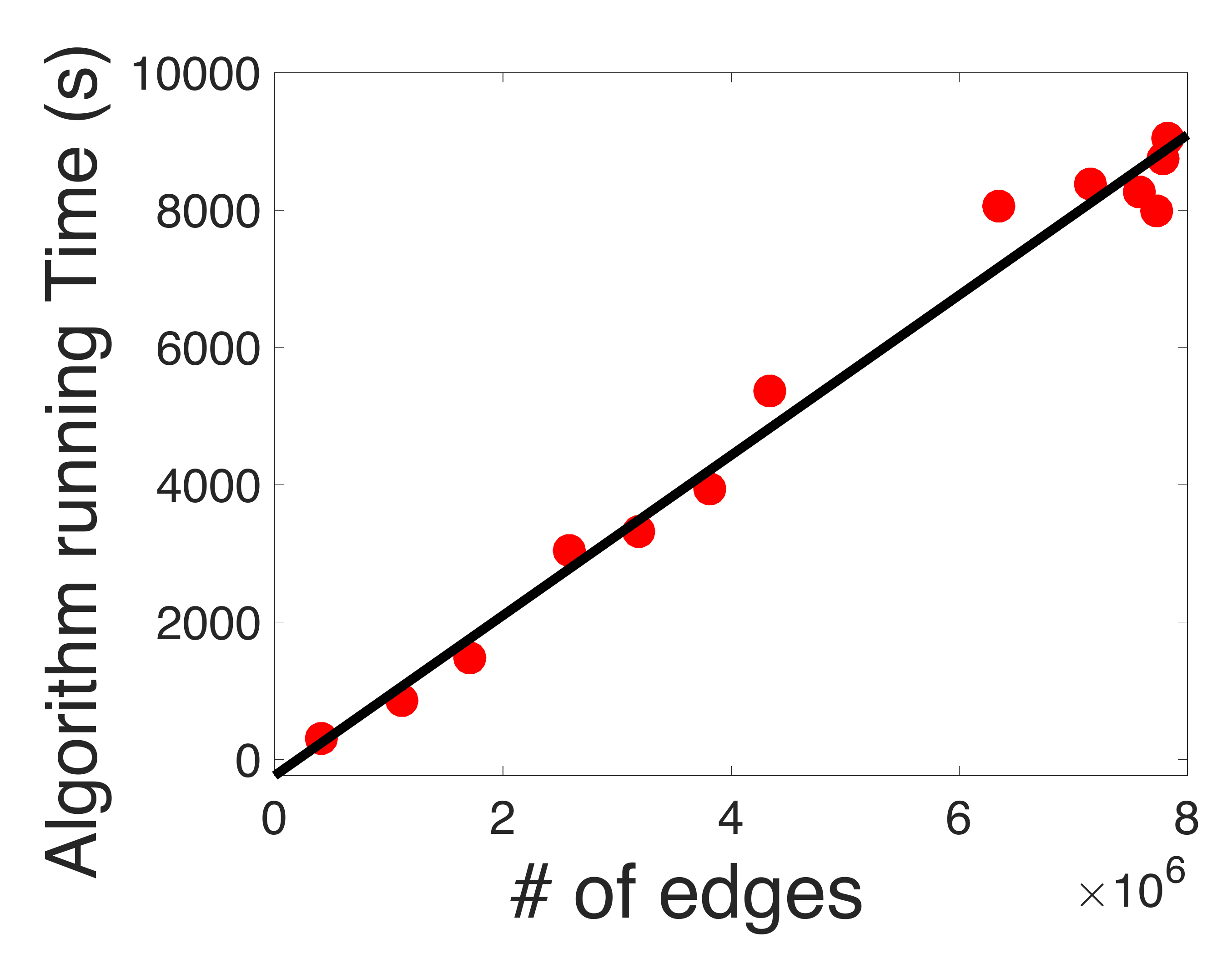}}
	\caption{Amazon Electronics dataset}
	%\label{sfigconncmp}
    \end{subfigure}
    \caption{HoloScope (HS) runs in near-linear time.}
    \label{figeffe}
\end{figure}

%\begin{figure}
%    \centering
%	\includegraphics[height=1.4in]{effeciencyexp}
%	\caption{Algorithm running time v.s.\ the number of edges.}
%	\label{figeffe}
%\end{figure}

Accurately detecting fraudulent blocks of lower density requires aggregating more sources of information. 
%Such a prior suspiciousness guarantees
%a better detection bound, and camouflage resistance.
%This shows that researchers gradually realized 
%that purely density detection is not robust for fraud detection
%problem anymore.
%However those prior weights do not adjust themselves, 
%given current selection of suspect users.
%From a synthetic example in Figure~\ref{fighyblock} (a),
%we can see that Fraudar shrinks the height of detection
%box (green box), compared with ``average degree'' (red box) 
%that without edge weighting. 
%But the prior weight stops Fraudar from being aware of 
%the posterior that many other users outside the green box
%also connect to the objects between column 500 and about 700 
%(green dash-dot line).
%And dense blocks of neutral ratings may not 
%be a suspicious block, as opposite to either very
%positive or negative ratings.
Consider the attribute of the creation time of edges: fraudulent attacks tend
to be concentrated in time, e.g., 
fraudsters may surge to retweet a message, 
creating one or more sudden bursts of activity, followed by sudden drops after the attack is complete. 
Sudden bursts and drops have not been
directly considered together in previous work.

Tensor based 
methods~\cite{jiang2015general,shin2016mzoom,shin2017dcube}
incorporate edge attributes into a multi-way tensor formulation,
e.g., IPs, rating scores and time.
However, those methods rely on time-binning to incorporate temporal information, and then treat time bins independently, which
loses information about bursts and drops, which our approach captures. 

%Fraudar~\cite{hooi2016fraudar} introduced a way 
%to consider extra information, by assigning a prior suspiciousness
%to nodes and edges. 
%However, the prior suspiciousness is predefined and fixed
%at the beginning which is not flexible.
%The prior suspiciousness is decided globally by some
%experiences, such as suspicious profiles of users or objects,
%and suspicious text of a review by a user for an object. 
%Besides, a sudden drop in Figure~\ref{figtsshowcase} (b)
%is another observations after fraudsters finishing 
%their attack, which are not considered in the previous work.

Therefore, we propose HoloScope, which combines suspicious 
signals from graph topology, temporal bursts and drops, 
and rating deviation. 
Our graph topology-based weighting scheme dynamically reweights objects according to our beliefs about which users are suspicious. 
% Our topology captures the contrast ``ratio'' of an object
% connected by suspicious users to the object's whole connections,
% in which the higher ratio implies that no other users
% are willing to connect to the suspicious object.
Temporally, HoloScope detects suspicious spikes of bursts and drops, which increases the time cost needed for fraudsters to conduct an attack.
%In the same way, HoloScope uses the contrast ratio
%of how many suspicious users create edges in a bursting period,
%and the sudden drop, to measure suspiciousness.
% The rating deviation of an object considers the differences of
% the rating scores between the suspicious users and 
% the others. 
In terms of rating deviation, our approach takes into account how much difference there is between an object's ratings given by suspicious users and non-suspicious users. 

In summary, our contributions are:
%Our Holoscope can improve the biased density block detection,
%and consider more properties like timing and rating scores, 
%so that the method can improve the precision, and increase
%the time cost and extra rate cost for fraudsters without being detected.
\begin{itemize}
    \item{\textbf{Unification of signals:}} we make holistic
	use of several signals, namely connectivity (i.e., topology), temporal bursts and drops,
	and rating deviation in a systematic way. 
	%Especially, HoloScope captures the spikes in the time
	%series of an object, i.e., sudden drop and burst, 
	%and the rate deviation of in-and-out suspicious users.
    \item{\textbf{Theoretical analysis of fraudsters' obstruction:}}
	we show that if the fraudsters use less than an upper bound
	of time for an attack, they will 
	cause a suspicious drop or burst. In other words, HoloScope obstructs fraudsters by increasing the time they need to perform an attack. 
    \item{\textbf{Effectiveness:}} we achieved higher accuracy 
	than the competitors on semi-real and real datasets. 
	In fact, HoloScope using only topology information (HS-$\alpha$)
	outperformed the graph-based competitors (see Fig.~\ref{sfigconncmp}),
	while HoloScope (HS) using all attributes
	achieved further improvement, and outperformed the tensor-based
	competitors (see Fig.~\ref{sfigallcmp} and \ref{sfigwbexp}).
    \item{\textbf{Scalability:}} 
 %    although HoloScope needs to dynamically
	% update the suspicious of objects, the algorithm can be as fast as
	% sub-quadratic time of the number of nodes, 
	% with a reasonable assumption. 
	HoloScope runs in subquadratic time in the number of nodes, under reasonable assumptions. 
	Fig.~\ref{figeffe} 
	shows that its running time increases near-linearly with the number of edges.
\end{itemize}
% The experimental results show that our HoloScope achieves
% higher accuracy in fraud detection by considering topology, 
% spikes patterns, and rate deviation simultaneously, 
% in the comparison with the-state-of-the-art methods. 
% Especially when the fruadulent density decreases,
% HoloScope can still keep a better performance
% at a lower density. And 
In addition, in Microblog, Sina Weibo\footnote{
    The largest Microblog service in China, \url{http://www.weibo.com}} 
data, HoloScope achieved higher F-measure than the competitors in detecting the
ground truth labels, with high precision and recall. 
The code of HoloScope is open-sourced for
reproducibility~\footnote{\url{https://github.com/shenghua-liu/HoloScope}}.
%HS is a friendly framework with the help of our $GreedyShaving$
%	and $GreedyInflating$ algorithms. HS can shaving any
%	initially detected blocks, such as those by Fraudar
%	and MZoom; or inflate a rank of suspicious users, such as
%	the spectral rank from SVD. 
	%the suspiciousness $P(v|A)$ introduced by HoloScope,
	%can also be embedded into the other metrics 
	%for fraud detection. 

\begin{table}[htbp]
  \centering
    \footnotesize
  \caption{Comparison between HoloScope and other fraud detection algorithms.}
    \begin{tabular}{L||cccccc||c}
	& \rotv{Fraudar~\cite{hooi2016fraudar}}& \rotv{SpokEn~\cite{prakash2009eigenspokes}}& 
	\rotv{CopyCatch~\cite{beutel2013copycatch}}& \rotv{CrossSpot~\cite{jiang2015general}}&   
	\rotv{BP-based methods~\cite{pandit2007netprobe,akoglu2013opinion}}&
	\rotv{M-Zoom/D-Cube~\cite{shin2016mzoom,shin2017dcube}}& \rotv{HoloScope}\\
    \hline
    %connectivity & 1     & 1     & 1     & 1     & 1     & 1     & 1 \\
    %quadruplet   &       &	 &	 & 1      & ?     & 1     & 1 \\
    scalability  & \checkmark & \checkmark & \checkmark 
	& \checkmark &\checkmark &\checkmark & \checkmark \\
    camouflage   &\checkmark   &    & ?       &\checkmark  &  &
	\checkmark & \checkmark\\
    hy-community &       &       & ?     & ?  &   &     & \checkmark \\
    spike-aware  &       &       & ?      &       &      &      & \checkmark \\
    %time density &       &       & \checkmark     & \checkmark  &   & \checkmark  & \checkmark \\
    %rate density &       &       &       & \checkmark &     & \checkmark     & \checkmark \\
    %divergent    &       &       &       & 1     &       &       & 1 \\
  %(multi-) burst  &       &       & ?     & ?    & ?    & ?     & 1 \\
    
    %divergent rate &       &       &       &       &       &       & 1 \\
    \end{tabular}%
  \label{tbrelworks}%
\end{table}%

\section{Related Works}

Most existing works study anomaly detection 
based on the density of blocks within 
adjacency matrices~\cite{prakash2009eigenspokes, jiang2014inferring},
or multi-way tensors~\cite{shin2016mzoom,shin2017dcube}. 
%OddBall~\cite{akoglu2010oddball} found new rules and patterns in 
%the distribution of eigenvalues for anomaly detection. 
%\cite{jiang2014inferring} detected suspicious patterns in the plots of 
%SVD (singular vector decomposition). 
%In stead of detecting density block by average degree~\cite{charikar2000greedy}, 
\cite{gibson2005discovering} and CoreScope~\cite{shin2016corescope}  
proposed to use Shingling and K-core algorithms respectively 
to detect anomalous dense block in huge graphs.
Taking into account the suspiciousness of each edge or node
in a real life graph potentially allows for more accurate detection. 
Fraudar~\cite{hooi2016fraudar} proposed to weight edges' suspiciousness 
by the inverse logarithm of objects' indegrees, to discount popular objects.
\cite{araujo2014beyond} found that the degrees in a large
community follow a power law distribution, forming hyperbolic structures. 
This suggests penalizing high degree objects to 
avoid unnecessarily detecting the dense core of hyperbolic community. 

In addition to topological density,
EdgeCentric~\cite{shah2015edgecentric} studied the distribution of rating 
scores to find the anomaly.
In terms of temporal attribute, 
the identification of burst period 
has been studied in~\cite{kumar2005bursty}.
A recent work, Sleep Beauty (SB)~\cite{ke2015defining} 
more intuitively defined the awakening time for a paper's citation burst for
burst period.
\cite{yang2011patterns} clustered the temporal patterns of 
text phrases and hash tags in Twitter.
%, and 
%\cite{myers2014bursty,paranjape2016motifs}
%studied the temporal dynamics of networks separately on ego-network and
%network motifs.
Meanwhile, \cite{gunnemann2014detecting,gunnemann2014robust}
modeled the time stamped rating scores with Bayesian model
and autoregression model respectively for anomalous behavior detection.
%Those works are mainly focus on the attributes
%for a single type of object, e.g., edges, ratings, and citations.
Even though \cite{xie2012review,li2016modeling,fei2013exploiting} 
have used bursty patterns to detect review spam, a sudden drop in
temporal spikes has not been considered yet. %is actually a more suspicious signal. 

However, aggregating suspiciousness signals from 
different attributes is challenging.
\cite{cormack2009reciprocal} proposed 
RRF (Reciprocal Rank Fusion) scores for combining different rank lists in information retrieval.
However, RRF applies to ranks, not suspiciousness scores.
%\begin{equation}
%    \nonumber
%    RRF(v|A) = \sum_{r \in R}\frac{1}{k + r(v|A)}
%\end{equation}
%where $r(\cdot)$ is the order in a rank, $R$ is
%the ranks obtained by all types of abnormality scores.

%Researchers used the tensor-based methods~\cite{jiang2015general,shin2016mzoom,shin2017dcube}
%to consider different attributes.
CrossSpot~\cite{jiang2015general}, a tensor-based algorithm,
estimated the suspiciousness of a block using a Poisson model. 
% However, the suspiciousness of every dimension is
% the KL divergence to a volume density of 
% the whole tensor, which does not consider 
% the differences between popular and unpopular objects.
However, it did not take into account the difference between popular and unpopular objects. 
Moreover, although CrossSpot, M-Zoom~\cite{shin2016mzoom} and
D-Cube~\cite{shin2017dcube}
can consider edge attributes like rating time and
scores via a multi-way tensor approach, 
they require a time-binning approach. When time is split into bins, 
attacks which create bursts and drops may not stand out clearly after
time-binning.
%treating each time bin as an independent dimension. 
The problem of choosing bin widths for histograms was
studied by Sturges~\cite{sturges1926choice} 
assuming an approximately normal distribution, and 
Freedman-Diaconis~\cite{freedman1981histogram}
based on statistical dispersion. However, the binning
approaches were proposed for the time series of a single
object, which is not clear for different kinds of objects 
in a real life graph, namely, popular products and
unpopular products should use different bin sizes.

Belief propagation~\cite{pandit2007netprobe} 
is another common approach for fraud detection
which can incorporate some specific 
edge attributes, like sentiment~\cite{akoglu2013opinion}.
%or use seeds discovered from suspicious patterns~\cite{jiang2014catchsync}
%as a prior. 
However, its robustness against adversaries which try to hide themselves is not well understood.

CopyCatch~\cite{beutel2013copycatch} detected lockstep behavior by 
maximizing the number of edges in blocks constrained within 
time windows. However, this approach ignores the distribution of edge creation times within the window, and does not capture bursts and drops directly. 
%CopyCatch also only supports a single time center, and hence cannot capture attacks which occur over multiple bursts. 
% Such a fixed-size time window for each object 
% can capture a dense time area, which does not 
% exactly focus on the involvement 
% from awakening to burst, and afterward
% drop. 
% Besides CopyCatch only supports one time center.
% The fraud attacks at a separate time in
% multiple groups (multi-bursts) may disturb the algorithm.
%Therefore, although it is not easy for fraudsters 
%to sufficiently scatter the firepower, they 
%may attack this detection by partitioning the 
%firepower into multiple groups, which results in
%multiple bursts.

Finally, we summarize our competitors compared to our HoloScope
in Table~\ref{tbrelworks}. We use ``hy-community'' to indicate 
whether the method can avoid detecting the naturally-formed
hyperbolic topology which is unnecessary for fraud detection.
HoloScope is the only one which 
considers temporal spikes (sudden bursts and drops and multiple bursts) 
and hyperbolic topology, in a unified suspiciousness framework.

\newtheorem{complexity}{complexity}
\newtheorem{axiom}{Axiom}
\newtheorem{trait}{Trait}
\newtheorem{informal problem}{Informal Problem}
\newtheorem{problem}{Problem}
\section{Proposed Approach}

The definition of our problem is as follows.
\begin{problem}[informal definition]
    \textbf{Given} quadruplets ($user$, $object$, $timstamp$, $\#stars)$,
	    where $timestamp$ is the time that a $user$ rates an $object$, and
	    $\#stars$ is the categorical rating scores.
    \begin{itemize}
	\item[-]\textbf{Find} a group of suspicious users, and suspicious objects
	    or its rank list with suspiciousness scores,
	\item[-]\textbf{to optimize} the metric under the common 
	    knowledge of suspiciousness from topology,
	    rating time and scores.
    \end{itemize}
\end{problem}
To make the problem more general, 
$timestamps$ and $\#stars$ are optional. 
For example, in Twitter, we have  
$(user, object, timestamp)$ triples, where $user$ retweets 
a message $object$ at $timestamp$. 
In a static following network, 
we have pairs $(user, object)$, with
$user$ following $object$.

As discussed in previous sections, our metric should capture the following basic traits:
\begin{trait}[Engagement]
    \label{axiengage}
    Fraudsters engage as much firepower as
    possible to boost customers' objects, i.e., suspicious objects. 
    % with a certain concealing.
\end{trait}

\begin{trait}[Less Involvement]
    \label{axiinvolv}
    Suspicious objects seldom attract non-fraudulent users to connect with them.
\end{trait}

\begin{trait}[Spikes: Bursts and Drops]
    \label{axiburst}
    Fraudulent attacks are concentrated in time, sometimes over multiple waves of attacks, creating bursts of activity. Conversely, the end of an attack corresponds to sudden drops in activity.
    % The firepower of fraudsters is timing concentrate,
    % and sometimes multi-waved, causing burst points in
    % time series. After finishing fraud, a sudden drop is usually observed
    % due to objects' own fame.
\end{trait}

\begin{trait}[Rating Deviation]
    \label{axirate}
    % High-rating objects seldom attract other users than
    % defaming fraudsters to rate extremely low.
    % On the opposite, low-rating or neutral-rating objects seldom attract 
    % other users than fraudsters to rate extremely high.
    High-rating objects seldom attract extremely low ratings from normal users. Conversely, low-rating objects seldom attract extremely high ratings from normal users. 
\end{trait}

Thus we will show in the following sections, that our proposed metric 
can make holistic use of several signals, namely topology, temporal
spikes, and rating deviation, 
to locate suspicious users and objects satisfying the above traits.
That is the reason we name our method as HoloScope (HS).

\subsection{HoloScope metric}
To give a formal definition of our metric, we  
describe the quadruplets ($user$, $object$, $timstamp$, $\#stars$) 
as a bipartite and directed graph $\mathcal{G}=\{U,V,E\}$, 
which $U$ is the source node set, $V$ is the sink node set,
and connections $E$ is the directed edges from $U$ to $V$. 
Generally, graph $\mathcal G$ is a multigraph, i.e., multiple edges can be present between two nodes.
Multiple edges mean that a user can repeatedly comment or rate on the same product at a different time, as common in practice.
Users can also retweet message multiple times in the Microblog $Sina$ $Weibo$.
Each edge can be associated with rating scores ($\#stars$), 
and timestamps, for which the data structure is introduced
in Subsection~\ref{secburstdrop}.

%In the scenario of webpages/message sharing, source nodes $U$ are
%users, and sink nodes $V$ are webpages/messages. 
%When a user thumb up a webpage or retweet a message, 
%a directed edge is formed between them. 
%In the user following graph, source nodes $U$ are the fans, and 
%sink nodes $V$ are those being followed. $U$ and $V$ may 
%contain the same users.
%As for rating a product, 
%edges $E$ is the rating relations between users $U$ and
%objects $V$. Each edge can be associated with rating scores 
%(#stars), and timestamps.

Our HoloScope metric detects fraud from three perspectives:
topology connection, timestamps, and rating score.
To easily understand the framework, we first introduce
the HoloScope in a perspective of topology connection. 
Afterwards, we show how we aggregate the other 
two perspectives into the HoloScope.
%, with the help 
%of the contrast suspiciousness, and 
%the global suspiciousness inherited from Fraudar.
We first view $\mathcal G$ as 
a weighted adjacency matrix $\mathbb{M}$, with the number of 
multiple edges (i.e., edge frequency) as 
matrix elements.

%We firstly consider popularizing products, only keeping edges
%of high rating scores. 
%For the fraud detection of general product rating, 
%we will discussed in later sections.
%So before introducing any properties on multiple
%edges, we can simply treat graph $\mathcal G$ as a frequency matrix.
%For detecting defaming fraud,
%edges of low rating scores are kept, and the algorithm can
%be easily extended to such a case, which will be shown 
%in later sections.

Our goal is to find lockstep behavior 
of a group of suspicious source nodes $ A\subset U$ 
who act on a group of sink nodes $B\subset V$.
Based on Trait~\ref{axiengage}, the total
engagement of source nodes $A$ to sink nodes $B$ can be
basically measured via density measures. 
There are many density measures,
such as arithmetic and
%~\cite{goldberg1984finding,gallo1989fast},
geometric average degree. 
%~\cite{charikar2000greedy}.
Our HoloScope metric allows for any such measure.
However, as the average degree metrics have a bias toward including too many nodes, we use a
measure denoted by $D(A,B)$ as the basis of the HoloScope, defined as:
% Later we will show that HoloScope can offer a guarantee 
% of a precise location with a dynamic weight.
% Later we will show that HoloScope achieves greater 
% accuracy using a dynamic weighting scheme. 
\begin{equation}
    \label{eqD}
    D(A,B) =\frac{\sum_{v_i \in B} f_A(v_i)}{|A|+|B|}
\end{equation}
where $f_A(v_i)$ is the total edge frequency 
from source nodes $A$ to a sink node $v_i$. 
$f_A(v_i)$ can also be viewed as an engagement from $A$ to $v_i$, 
or $A$'s lockstep on $v_i$, which is defined as
\begin{equation}
    \label{eqfav}
    f_A(v_i) = \sum_{(u_j,v_i) \in E \land u_j \in A}{\sigma_{ji}\cdot e_{ji}}
\end{equation}
where constant $\sigma{ji}$ are the global suspiciousness on 
an edge, and $e_{ji}$ is the element of adjacency matrix $\mathbb{M}$,
i.e., the edge frequency between a node pair $(u_j,v_i)$.
The edge frequency $e_{ji}$ becomes a binary in a simple graph.
The global suspiciousness as a prior can come from the degree, 
and the extra knowledge on fraudsters, such as 
duplicated review sentences and unusual behaving time.
%bryan edit
%where constants $e_{ji}$ and $\sigma_{ji}$ are the frequency
%and the global suspiciousness of the multiple edges $(u_j,v_i)$ respectively. 
%$\Omega$ is the average global suspiciousness 
%of the nodes in $A$ and $B$:
%\begin{equation}
%\Omega = \frac{\sum_{i\in A\cup B}\gamma_i}{|A|+|B|}
%\end{equation}
%where $\gamma_i$ is the global 
%suspiciousness of source or sink node $i$.
%The global suspiciousness of a node is a prior which can come from its degree, 
%or any extra knowledge about fraudsters, such as 
%duplicated review sentences and unusual temporal behavior.

To maximize $D(A,B)$, the suspicious source nodes $A$ and 
the suspicious sink nodes $B$ are mutually dependent.
% ; conditioning on one group, it brings more information about the other group. 
Therefore, we introduce \emph{contrast suspiciousness} 
in an informal definition: 
\begin{definition}[contrast suspiciousness]
    The contrast suspiciousness denoted as $P(v_i \in B|A)$ 
    is defined as the conditional likelihood of a sink node $v_i$ 
    that belongs to $B$ (the suspicious object set),
    given the suspicious source nodes $A$.
\end{definition}
A visualization of the contrast suspiciousness 
is given in Fig.~\ref{figdef}.
The intuitive idea behind contrast suspiciousness is that
in the most case, we need to judge the suspiciousness of 
objects by currently chosen suspicious users $A$, e.g.,
an object is more suspicious if very few users not in $A$ 
are connected to it (see Trait~\ref{axiinvolv}); the sudden burst of 
an object is mainly caused by $A$ (see Trait~\ref{axiburst}); 
or the rating scores from $A$ to an object are quite different 
from other users (see Trait~\ref{axirate}).
Therefore, such suspiciousness makes use of the contrasts 
between users in $A$ and users not in $A$ or the whole set. 

\begin{figure}
\centering
   {\includegraphics[width=0.24\textwidth]{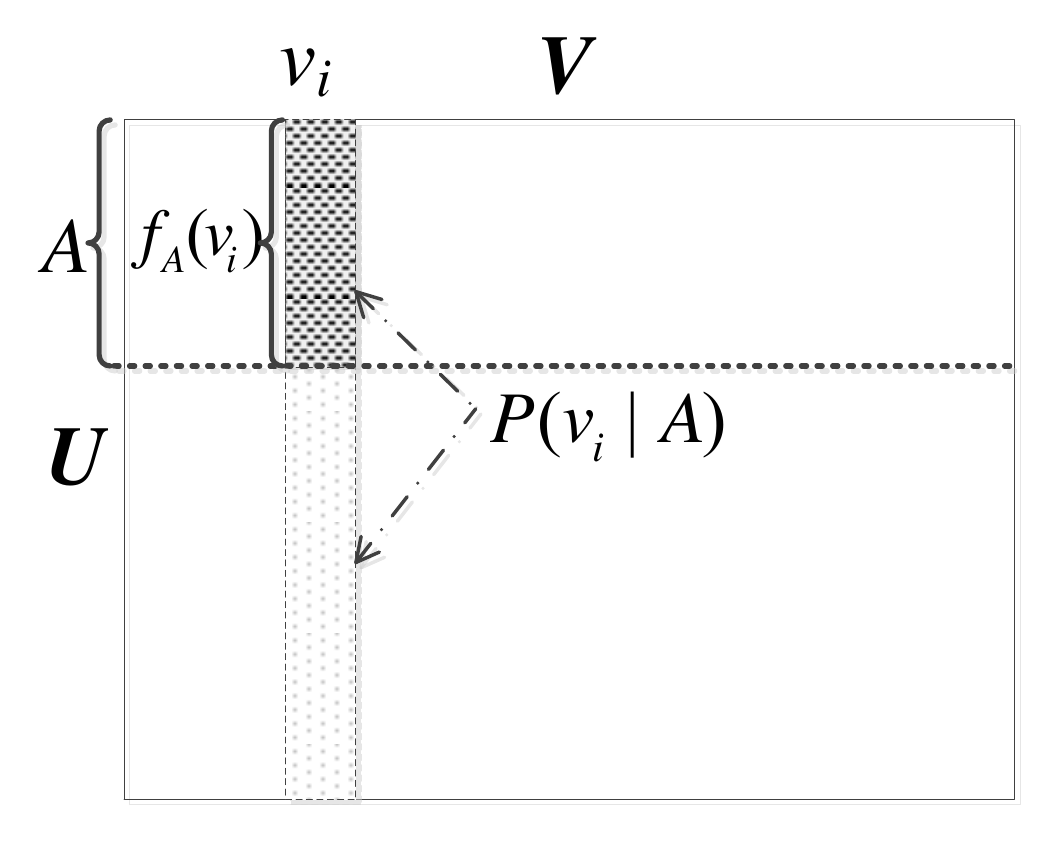}}
    \caption{An intuitive view of our definitions in the HoloScope.}
\label{figdef}
\end{figure}

Finally, instead of maximizing $D(A,B)$, 
we maximize the following expectation of 
suspiciousness $D(A,B)$ over the probabilities $P(v_i \in B|A)$:
\begin{align}
    \label{eqobj}
    \nonumber
    \max_{A} HS(A) &:= \mathbb{E}\left[D(A,B)\right] \\
	     &= \frac{\displaystyle 1}{\displaystyle |A| + 
	     \sum_{k\in V}P(v_k|A)} {\displaystyle\sum_{i\in V}
	     f_A(v_i)P(v_i|A)} 
	     %+ \mathbb{E}[\Omega]
\end{align}
where for simplicity we write $P(v_i|A)$ to mean $P(v_i \in B|A)$.
% as the contrast suspiciousness
% to represent the conditional probability that a
% sink node $v_i$ is suspicious, i.e., $v_i\in B$. 
$1-P(v_i|A)$ is the probability of $v_i$ being a normal sink node.
We dynamically calculate the contrast suspiciousness for all the objects, 
after every choice of source nodes $A$. 
%The expectation of $\Omega$ is then
%\begin{equation}
%    \nonumber
%    \mathbb{E}[\Omega] = \frac{\sum_{j\in A}\gamma_j + 
%	     \sum_{j\in V}\gamma_j P(v_j|A)}{|A|+\sum_{j\in V}P(v_j|A)}.
%\end{equation}
%Thus, if all the global suspiciousness 
%of source nodes and sink nodes are equal, i.e., 
%$\gamma_i = \gamma_j=\gamma \ \forall \ i,j$, then the
%expectation will be a constant $\gamma$,
%which can be ignored.

Using this overall framework for our proposed metric $HS(A)$, 
we next show how to satisfy the remaining Traits. 
To do this, we define contrast suspiciousness 
$P(v_i|A)$ in a way that takes into account various edge attributes. 
This will allow greater accuracy particularly for detecting low-density blocks.
%And the way to aggregate multiple properties.

\subsubsection{HS-$\alpha$: Less involvement from others.}

Based on Trait~\ref{axiinvolv}, a sink node should be more suspicious 
if it only attracts connections from the suspicious source nodes $A$, and less 
from other nodes. 
Mathematically, we capture this by defining
\begin{equation}
    P(v_i | A) \propto q(\alpha_i), \text{~where~} 
    \alpha_i=\frac{f_A(v_i)}{f_{U}(v_i)} 
\end{equation}
where $f_U(v_i)$ is the weighted indegree of sink node 
$v_i$. Similar to $f_A(v_i)$, the edges are weighted by global
suspiciousness.
$\alpha_i$ measures the involvement ratio of $A$ in the activity of sink node $v_i$.
The scaling function $q(\cdot)$ is our belief about how this ratio relates to suspiciousness, and we choose the exponential form 
$q(x)=b^{x-1}$, where base $b>1$. 
% In this way, 
% the contrast suspiciousness from the connectivity involvement,
% and other properties like timing and rating, can be aggregated
% together by a product as a joint probability.
% So the exponential function $q(x)$ can substantially sum
% the suspiciousness scores in the exponent. 
% In such a way, scaling function can prevent the joint probability
% being counteracted by a very small probability 
% from other perspectives.
%The choice of base $b$ is not very sensitive,
%and we will show some advice later.

\begin{figure}
\centering
    \begin{subfigure}[b]{0.24\textwidth}
	\centering
	{\includegraphics[width=\textwidth]{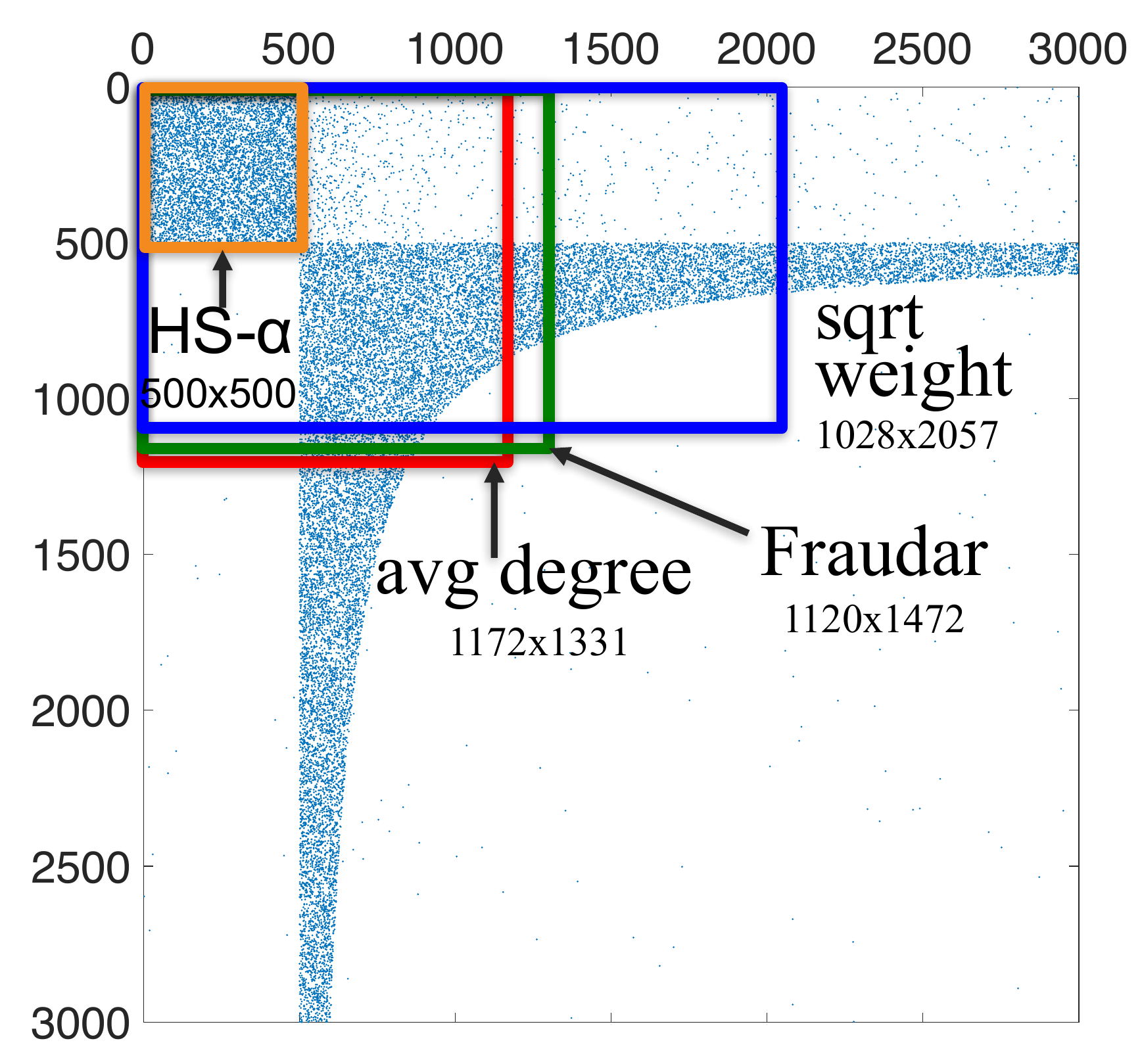}}
	\caption{HS-$\alpha$ finds the exact dense block in the synthetic data}
	\label{sfighyper}
    \end{subfigure}
    \begin{subfigure}[b]{0.22\textwidth}
	\centering
	{\includegraphics[width=\textwidth]{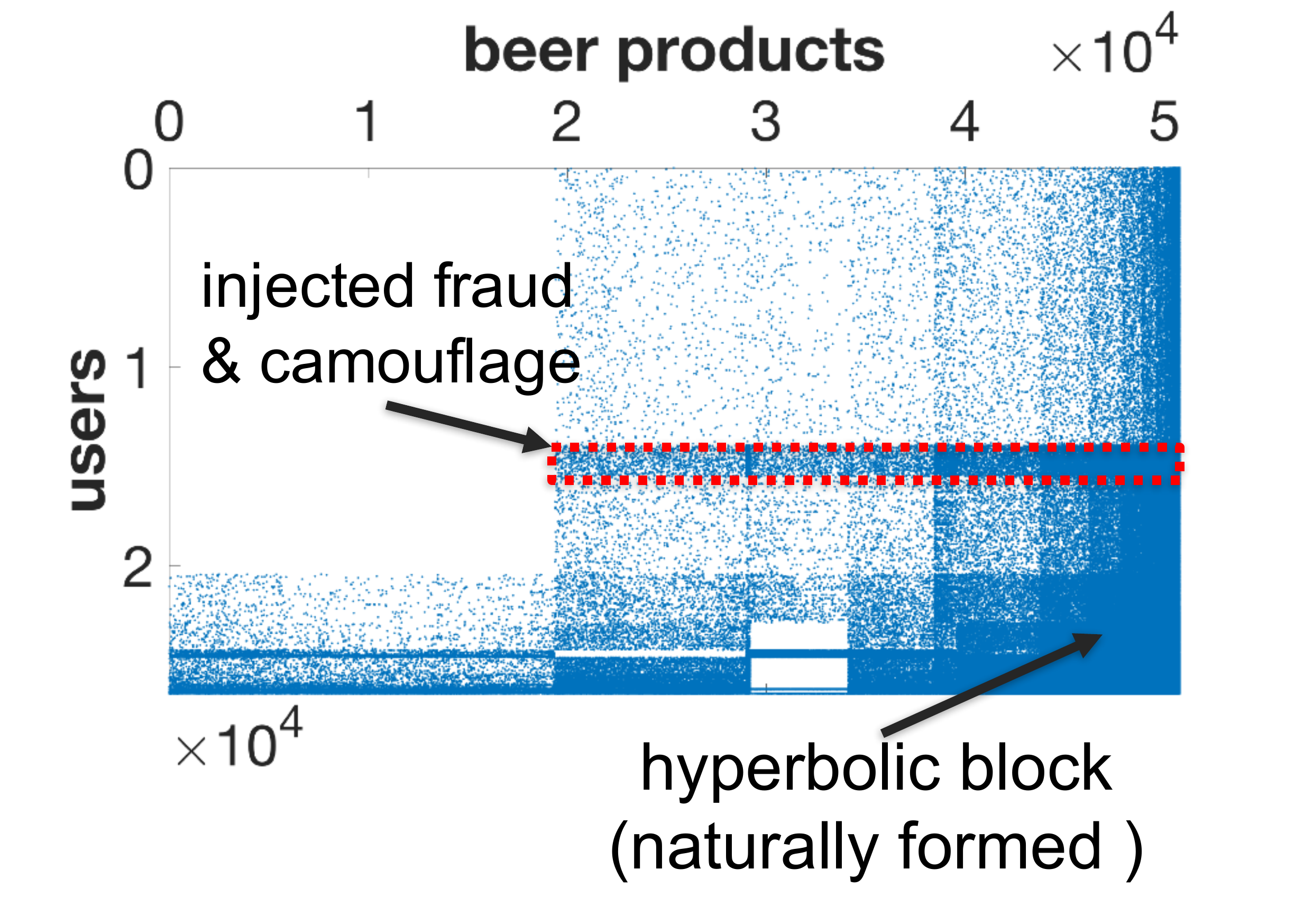}}
	\caption{Hyperbolic community in BeerAdvocate data}
	\label{sfigbeerhyper}
    \end{subfigure}
    \caption{(a) The synthetic data consists of hyperbolic 
    and rectangular blocks, with volume density around 0.84 and 0.60
    respectively. The camouflage is randomly biased to columns of high degree.
    %Existing methods without considering contrast
    %suspiciousness, i.e.\ Trait~\ref{axiinvolv}, do not detect
    %the most suspicious area precisely. They sometimes even miss 
    %smaller density rectangular areas, due to the existence 
    %of hyperbolic connections.
    (b) A real data of naturally-formed hyperbolic community, and injected
    dense block. The injection is $2000\times 200$ with biased camouflage.}
    \label{fighyblock}
\end{figure}

%As previous work showed,
%large communities form hyperbolic 
%structures, which is generated in 
%our synthetic data (see the lower-right block 
%in Fig.~\ref{sfighyper}), and 
%also exists in real BeerAdvocate data (see Fig.~\ref{sfigbeerhyper}).
%Thus comparing the rectangular dense block and 
%the hyperbolic core (i.e., the upper-left part of hyperbolic block in Fig.~\ref{sfighyper}), 
%which one is more suspicious?
%By examination in the scenario of online reviews,
%the products in the hyperbolic core are also rated by many other
%people not in the core, with high scores;
%In the case of rectangular block, the products seldom
%attract other people to give high scores, which may not be
%really a good products.
%So the rectangular dense block is more suspicious.

As previous work showed,
large communities form hyperbolic 
structures, which is generated in 
our synthetic data (see the lower-right block 
in Fig.~\ref{sfighyper}), and 
also exists in real BeerAdvocate data (see Fig.~\ref{sfigbeerhyper}).
For clarity, our HoloScope method are denoted as HS-$\alpha$
when it is only applied on a connection graph.
The results of the synthetic data show that
HS-$\alpha$ detected the exact dense 
rectangular block ($b=128$),
while the other competitors included
a lot of non-suspicious nodes from the core part of 
hyperbolic community resulting in low accuracy.
In the beer review data from the BeerAdvocate website, 
testing on different fraudulent density (see Fig.~\ref{sfigconncmp}),
our HS-$\alpha$ remained at high accuracy, while the
other methods' accuracy drops quickly 
when the density drops below 70\%. 

The main idea is that HS-$\alpha$ can do better because
it dynamically adjusts the weights for sink nodes, 
penalizing those sink nodes that also have many 
connections from other source nodes not in $A$.
% The weights are used for every edge connected 
% to those sink nodes.
%the methods including Fraudar and those based on average degree
%or square-root (sqrt) weight, output a solution
%overlapping both the rectangular block in the top-left and the hyperbolic block, 
%to maximize their objectives.
In contrast, 
although Fraudar proposed to
penalize popular sink nodes based their indegrees, 
these penalties also scaled down the weights of suspicious edges.
The Fraudar (green box) only improved the 
unweighted ``average degree'' method (red box) by a very limited
amount. Moreover, with a heavier
penalty, the ``sqrt weight'' method (blue box)
achieved better accuracy on source nodes but 
worse accuracy on sink nodes, since those methods used globally fixed weights, 
and the weights of suspicious were penalized as well.
Hence the hyperbolic structure pushes those methods to include more nodes
from its core part.
%Since those methods use globally fixed weights, and the suspicious and
%non-suspicious nodes are penalized in the same way, 
%the methods then need to include
%so that the algorithm need to include more nodes to maximize the density.

In summary, our HS-$\alpha$ using dynamic contrast suspiciousness
can improve the accuracy of fraud detection in `noisy' 
graphs (containing hyperbolic communities), 
even with low fraudulent density.

\begin{figure*}
\centering
    \begin{subfigure}[t]{0.34\textwidth}
	\centering
	\includegraphics[height=1.75in]{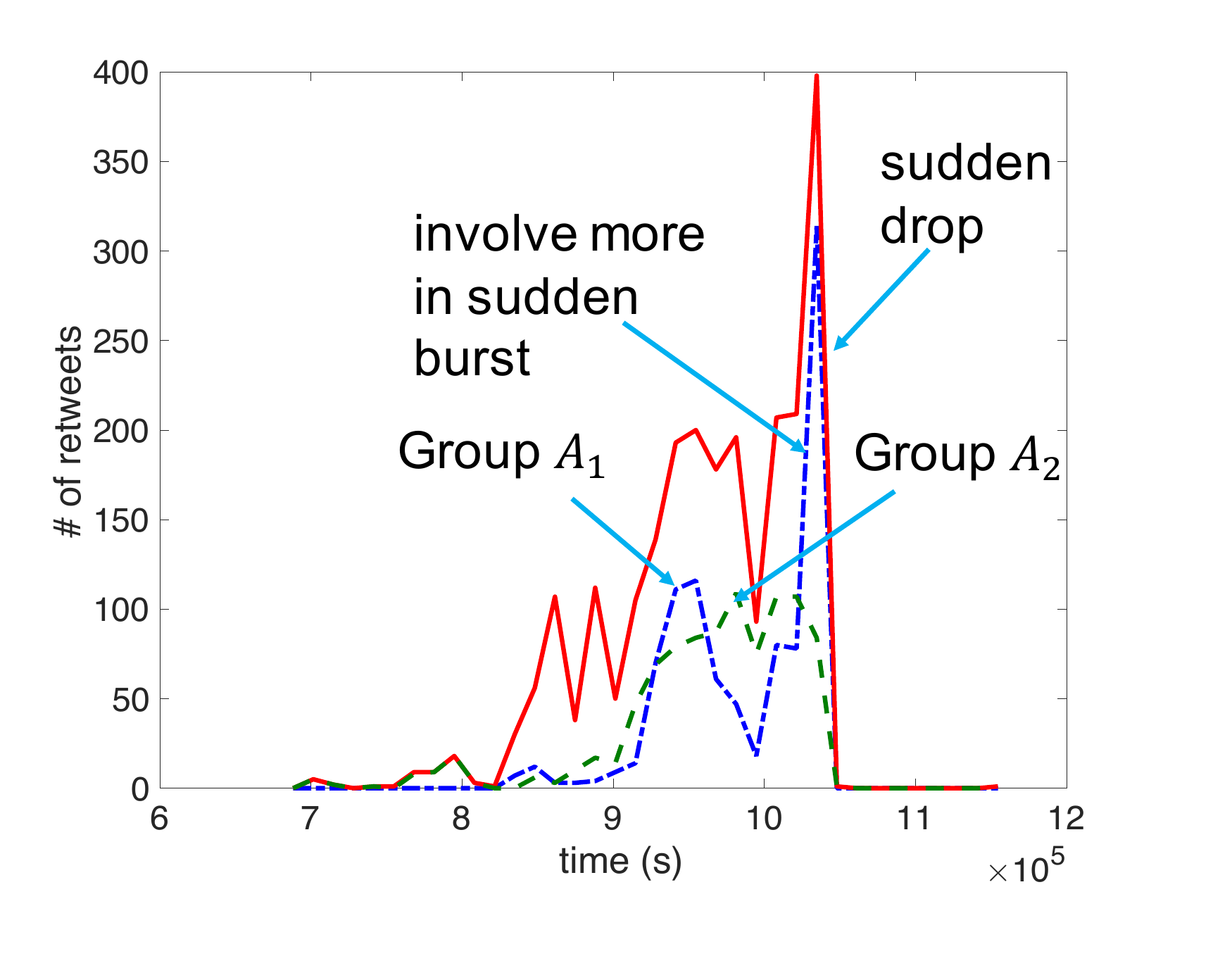}
	\caption{Group $A_1$ is 
	more suspicious than $A_2$, due to the sudden burst and drop}
	\label{sfigwbts}
    \end{subfigure}
    \begin{subfigure}[t]{0.31\textwidth}
	\centering
	\includegraphics[height=1.7in]{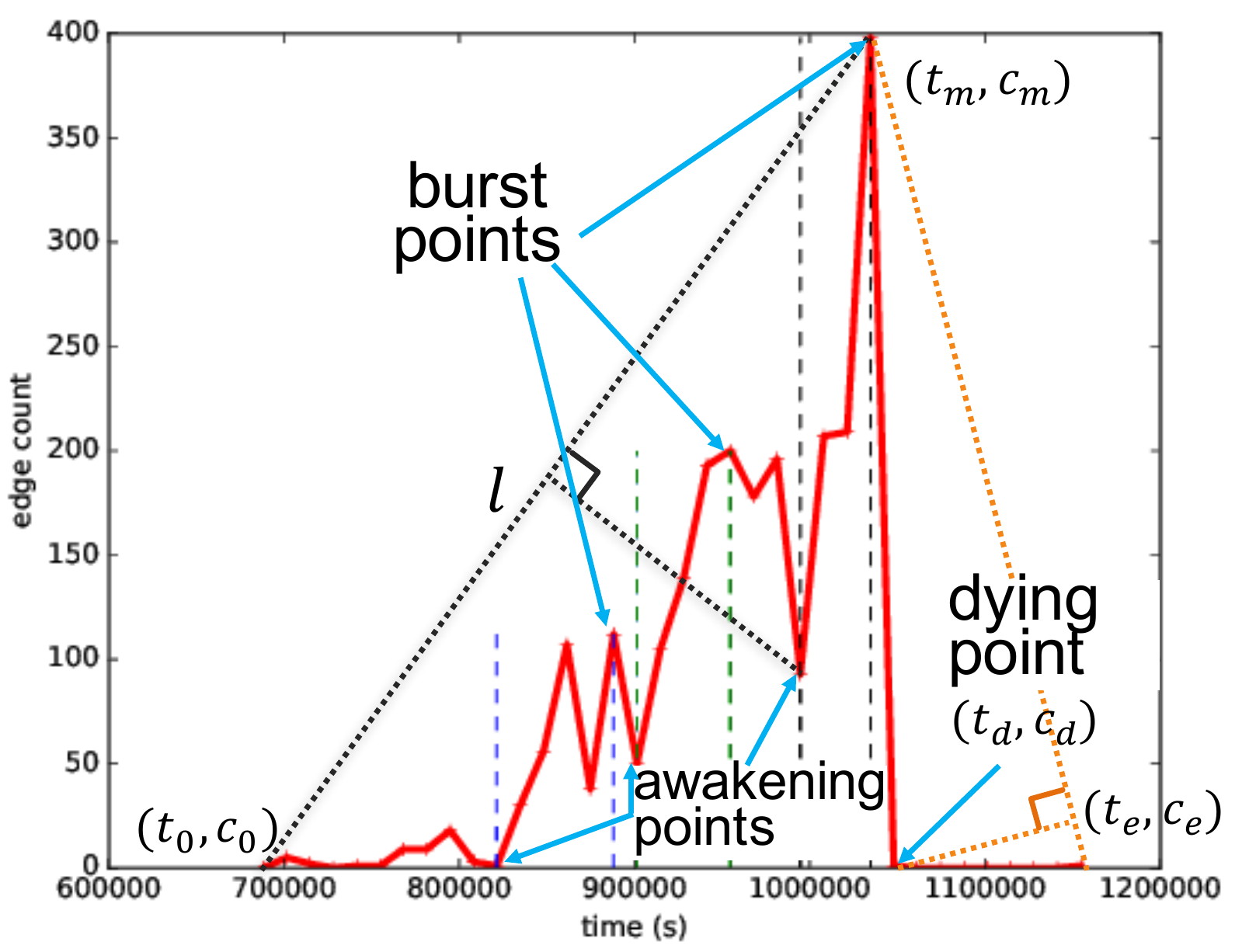}
	\caption{Detection of temporal bursts and drops}
	\label{sfigwbbd}
    \end{subfigure}
    \begin{subfigure}[t]{0.3\textwidth}
	\centering
	{\includegraphics[height=1.0in]{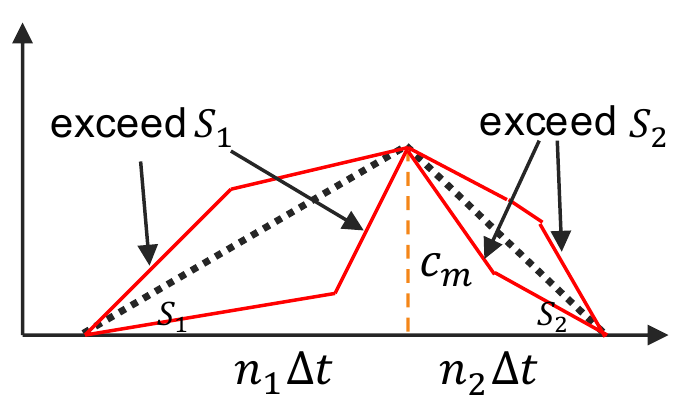}}
	\caption{Proof the time cost obstruction}
	\label{figtimecost}
    \end{subfigure}
    \caption{(a) and (b) are the time series (histogram) of a real message being
    retweeted in Microblog, Sina Weibo. The horizontal axis is the seconds 
    after $2013-11-1$. (c) illustrates our proof of time cost obstruction.
    }
\label{figtsshowcase}
\end{figure*}

\subsubsection{Temporal bursts and drops}
\label{secburstdrop}
%define time series $\mathcal T$
Timestamps for edge creation are commonly available in most real settings.
% With the extra property of timestamps, 
% $\mathcal{G}$ can be traditionally represented as 
% a multi-way tensors. 
% However, the tensor could be very sparse,  
% and representing the time in a way of tensor is not a 
% trivial thing, since time is never ending.
If two subgroups of Microblog users have the same number of retweets to a message, 
can we say they have the same suspiciousness?
As an example shown in Fig.~\ref{sfigwbts},
the red line is the time series (histogram of time bins) of 
the total retweets of a message in Microblog, Sina Weibo. 
The blue dotted line and green dashed line are
the retweeting time series respectively from 
user groups $A_1$ and $A_2$.
The two series have  
the same area under the time series curves, i.e., the same number of retweets.
However, considering that fraudsters tend to surge
to retweet a message to reduce the time cost, 
the surge should create one or more sudden bursts, 
along with sudden drops. 
Therefore, the suspiciousness of user groups $A_1$ and $A_2$ 
become quite different even though they have
the same number of retweets, which cannot be detected
solely based on connections in the graph.
Thus we include the temporal attribute into our HoloScope
framework for defining contrast suspiciousness.

%To include the temporal attribute, 
%we introduce a simple but effective data structure, 
%the \emph{property indexing matrix} (PIM).
%For each kind of
%property, we store them in a list of property entries. 
%Each property entry is associated with a pair of source node and sink node.
%The timestamps of multiple edges form a list and are associated
%with the corresponding entry.
%The PIM matrix has a similar structure to the weighted adjacency matrix of $\mathcal G$, except that it stores the index of a property entry 
%as an element, instead of the connection frequency.
%With the PIM matrix, we can easily 
%find the timestamps of edges from source nodes in $A$ via
%``slicing'' operations on the PIM.

Denote the list of
timestamps of edges connected to a sink node $v$ as $T_v$. 
To simplify notation, we use $T$ without subscript when talking
about a single given sink node $v$. Let $\mathcal T$=\{($t_0$, $c_0$), ($t_1$, $c_1$),
$\cdots$, ($t_e$, $c_e$)\} as the \emph{time series} of $T$, i.e., the histogram of $T$. The count $c_i$ is the 
number of timestamps in the time bin 
$[t_i-\Delta t/2, t_i+\Delta t/2)$, with 
bin size $\Delta t$.
The bin size of histogram is calculated according to
the maximum of Sturges criteria and the
robust Freedman-Diaconis' criteria as mentioned in related works. 
It is worth noticing that the HoloScope can \emph{tune} different 
bin sizes for different sink nodes, e.g., popular objects need
fine-grained bins to explore detailed patterns.
Hence, the HoloScope is more flexible than tensor 
based methods, which use a globally fixed bin size. 
Moreover, the HoloScope can update the time series at
a low cost when $T$ is increasing.

To consider the burst and drop patterns described in Trait~\ref{axiburst}, 
we need to decide the start point of a burst and the end point of
a drop in time series $\mathcal T$. Let the burst point be ($t_m$, $c_m$), 
having the maximum value $c_m$.
According to the definition in previous work ``Sleeping Beauty'', 
we use an auxiliary straight line from the beginning to the burst point 
to decide the start point, named
the \emph{awakening point} of the burst.
Fig.~\ref{sfigwbbd} shows the time series $\mathcal T$ (red polygonal line)
of a message from Sina Weibo, the auxiliary straight line $l$
(black dotted line) from the lower left point ($t_0$, $c_0$) to upper right
point ($t_m$, $c_m$), and the awakening point for the maximum point
($t_m$, $c_m$), which is defined as the point along 
%After drawing a straight line $l$, we only consider 
%the awakening point under line $l$.
% Thus the awakening point 
%in the paper is then defined as
%the point which 
the time series $\mathcal T$ which 
%is under the line $l$, and 
maximizes the distance to $l$. 
As the dotted line perpendicular to $l$ suggests in this
figure, the awakening point $(t_a, c_a)$ satisfies 
\begin{equation}
    \label{eqta}
    t_a = \argmax_{(c,t)\in \mathcal T, t<t_m}{\frac{|(c_m-c_0)t-(t_m-t_0)c +
    t_mc_0 - c_mt_0|}
    {\sqrt{{(c_m-c_0)}^2 + {(t_m-t_0)}^2}}}
\end{equation}

Finding the awakening point for one burst is not enough, as multiple bursts may be present. Thus, sub-burst points and the associated awakening points 
should be considered. We then propose a recursive algorithm
$MultiBurst$ in Alg.~\ref{algmultiburst} for such a purpose.

\begin{algorithm}[htbp]
\caption{$MultiBurst$ algorithm.}
\small
\label{algmultiburst}
\begin{algorithmic}
    \STATE{\bf Input} Time series $\mathcal T$ of sink node $v$, beginning index
    $i$, end index $j$\\
    \STATE{\bf Output} A list of awakening-burst point pairs,\\
        \qquad $s_{am}$: slope of the line passing through each point pair,\\
	\qquad $\Delta c$: altitude difference of each point pair.
    \STATE\textbf{If} $j-i<2$ \textbf{then return}
    \STATE $(t_m, c_m)$ = point of maximum altitude between indices $i$ and $j$.
    \STATE $(t_a, c_a)$ = the awakening point as Eq~(\ref{eqta}) between
    indices $i$ and $j$.
    \STATE $\Delta c_{am} = c_m - c_a$, and  $s_{am}=\rfrac{\Delta c_{am}}{(t_m - t_a)}$
    \STATE Append \{$(t_a, c_a), (t_m, c_m)$\}, $s_{am}$, and $\Delta c_{am}$ into
    the output.
    \STATE $MultiBurst(\mathcal T, i, a-1)$
    \STATE $k$ = Find the first local min position from indices $m+1$ to $j$ 
    \STATE $MultiBurst(\mathcal T, k, j)$
\end{algorithmic}
\end{algorithm}

After finding awakening and burst points, 
the contrast suspiciousness of burst awareness
satisfies $P(v_i | A) \propto q(\varphi_i)$, where $\varphi_i$
is the involvement ratio of source nodes in $A$ in multiple bursts.
Let the collection of timestamps from $A$ to sink node $v_i$ be 
$T_A$. Then,
\begin{equation}
    \label{eqbeta}
    \varphi_i = \frac{\Phi(T_A)}{\Phi(T_U)} \text{, and }
    \Phi(T) = \displaystyle\sum_{(t_a,t_m)}\Delta c_{am}\cdot s_{am}
    \sum_{t\in T}\mathbf{1}(t\in [t_a, t_m])
    %\frac{\displaystyle\sum_{(t_a,t_m)}\Delta c_{am}\cdot s_{am}
    %\sum_{t\in T_A}\mathbf{1}(t\in [t_a, t_m])}{ 
    %\displaystyle\sum_{(t_a,t_m)}\Delta c_{am}\cdot s_{am}
    %\sum_{t\in T_U}\mathbf{1}(t\in [t_a, t_m])}
\end{equation}
where $s_{am}$ is the slope from the output of $MultiBurst$ algorithm.
Here $s_{am}$ is used as a weight based on how steep the current burst is. 
This definition of suspiciousness satisfies Trait~\ref{axiburst}.
It is worth noticing that the $MultiBurst$ algorithm only needs to 
be executed once. With the preprocessed awakening and burst points,
the contrast suspiciousness of edges connected to $v$ has $O(d_v)$
complexity, where $d_v$ is the degree of sink node $v$. 
Hence the complexity for overall sink nodes are $O(|E|)$.

In fact, sudden drops are also a prominent pattern
of fraudulent behavior as described in Trait~\ref{axiburst},
since after creating the attack is complete,
fraudsters usually stop their activity sharply.
%Fig.~\ref{sfigwbdrop} shows the heat map of sink nodes'
%indegrees and the drop slopes weighted by drop height, in which
%a large weighted slope implies a sudden drop.
%The data of the heat map consists of a user retweeting a message
%at a timestamp from Microblog system, Sina Weibo.
%In the upper right of the heat map, 
%we can see that the messages with large weighted drop slopes have been removed from the 
%system, implying suspiciousness.
%Thus a sudden drop measured by weighted drop slope
%can be used as a global suspiciousness measure for each edge
%connected to the sink node in the HoloScope metric $HS(A)$.
To make use of the suspicious pattern of a sudden drop,
we define the $dying$ $point$ as the end of a drop. 
As Fig.~\ref{sfigwbbd} suggests, 
another auxiliary straight line is drawn from the 
highest point ($t_m$, $c_m$) to the last point
($t_e$, $c_e$). The dying point ($t_d$, $c_d$) can be found by
maximizing the distance to this straight line.
Thus we can discover the ``sudden drop'' by the absolute slope value 
$s_{bd}$=$\displaystyle\sfrac{(c_m-c_d)}{(t_d - t_m)}$ between 
the burst point and the dying point. 
Since there may be several drops in a fluctuated time
series $\mathcal T$, we choose the drop with the maximum fall.
To find the maximum fall, we also need a recursive algorithm,
similar to Alg.~\ref{algmultiburst}:

\textit{ 1) Find a 
maximum point ($t_m$, $c_m$), and the corresponding 
dying point ($t_d$, $c_d$) by definition; 2) Calculate
the current drop slope $s_{bd}$, and the drop fall $\Delta c_{bd}$
= $c_m - c_d$; 3) Recursively find drop slope and drop fall
for the left and right parts of $\mathcal T$, i.e., $t<t_m$
and $t\geq t_d$ respectively.}

As a result, the algorithm returns the maximum drop fall 
$\Delta c_{bd}$, and its drop slope $s_{bd}$, which it has
found recursively.
Finally, we use the weighted drop slope $\Delta c_{bd} \cdot s_{bd}$
as a global suspiciousness in equation (\ref{eqfav}),
to measure the drop suspiciousness.
Each edge connected to the sink node $v$ is assigned 
the same drop suspiciousness.
We use a logarithm scale for smoothing those edge weights.

With this approach to detect bursts and drops, 
we now show that this provides a provable time obstruction for fraudsters. 
\begin{theorem}
    Let $N$ be the number of edges that fraudsters want 
    to create for an object.
    If the fraudsters use time less than 
    $\tau\geq\sqrt{\frac{2N\Delta t(S_1+S_1)}{S_1\cdot S_1}}$,
    then they will be tracked by a suspicious burst or drop,
    where $\Delta t$ is the size of time bins, and 
    $S_1$ and $S_2$ are the slopes of normal rise and
    decline respectively. 
\end{theorem}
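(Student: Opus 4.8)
The plan is to prove the contrapositive: if the fraudsters manage to avoid both a suspicious burst and a suspicious drop, then the attack must span at least $\tau=\sqrt{2N\Delta t(S_1+S_2)/(S_1 S_2)}$ in time. Fix the target object and let $\mathcal T$ be its time series during the attack window, of length $\tau'$. Here ``avoiding detection'' means that every awakening-to-burst slope returned by $MultiBurst$ is at most $S_1$ and every burst-to-dying slope is at most $S_2$ (otherwise Alg.~\ref{algmultiburst} already flags a suspicious spike); in particular the rising portions of $\mathcal T$ never climb faster than $S_1$ and the falling portions never fall faster than $S_2$. Assume, for contradiction, that $\tau'<\tau$.

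First I would convert the edge budget into an area constraint. Since $\mathcal T$ is the histogram of edge timestamps with bin width $\Delta t$, the fraudsters place $N=\sum_i c_i$ timestamps, so the area under $\mathcal T$ over the attack window equals $\sum_i c_i\Delta t=N\Delta t$. It therefore suffices to bound this area from above under the slope constraints, which is a small optimization problem over admissible time-series shapes.

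For that bound I would first treat a single-wave attack, where $\mathcal T$ (normalized so the baseline is $0$) rises monotonically to a peak $c_m$ at time $t_m$ and then declines monotonically back to $0$ by the end of the window. The slope constraints force $c(t)\le\min\{S_1 t,\,c_m\}$ on the way up and $c(t)\le\min\{S_2(\tau'-t),\,c_m\}$ on the way down; integrating the two envelopes and adding gives $N\Delta t\le c_m\tau'-\tfrac12 c_m^2\big(\tfrac1{S_1}+\tfrac1{S_2}\big)$. Maximizing the right-hand side over $c_m$ (the optimizer $c_m=\tau' S_1 S_2/(S_1+S_2)$ is exactly the triangular profile of Fig.~\ref{figtimecost}) yields $N\Delta t\le (\tau')^2 S_1 S_2/\big(2(S_1+S_2)\big)$, which rearranges to $\tau'\ge\tau$ — a contradiction. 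For a general multi-wave attack I would partition the window into the sub-intervals associated with the bursts found by $MultiBurst$, apply the single-wave bound on each, and use $\sum_j (\tau'_j)^2\le\big(\sum_j \tau'_j\big)^2\le(\tau')^2$ to recover the same inequality; intuitively, fragmenting an attack only wastes extra ramp-up/ramp-down area and never lets the fraudsters pack more edges into the same total time.

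The step I expect to be the main obstacle is making this multi-wave reduction airtight against the exact recursion in Alg.~\ref{algmultiburst}: awakening and dying points are defined by maximum distance to an auxiliary chord rather than by strict monotonicity, so I would have to verify that ``no returned slope exceeds $S_1$ or $S_2$'' genuinely decomposes $\mathcal T$ into monotone rise/fall pieces whose lengths sum to at most $\tau'$ — which is precisely what the convexity step needs, and where a cleverly shaped attack that overshoots inside a single rise segment could in principle hide. The remaining ingredients — the single-peak envelope integral and the closing algebra — are routine. I would also state explicitly the two modeling assumptions the argument leans on, namely that pre-attack baseline activity is negligible relative to $c_m$ and that the histogram area equals $N\Delta t$, since the theorem is meant as an order-of-magnitude time-cost obstruction rather than an exact combinatorial identity.
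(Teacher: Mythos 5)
Your proposal is correct, and it reaches the paper's bound through the same core picture (a triangular profile with rise slope $S_1$, fall slope $S_2$, and histogram area $N\Delta t$), but by a genuinely different and more rigorous route. The paper argues directly: it \emph{asserts} that one burst plus one drop is the most time-efficient attack, \emph{asserts} (via the remark about the trapezoid) that the triangle of Fig.~\ref{figtimecost} is the extremal shape, and then just solves the system $c_m/(n_1\Delta t)=S_1$, $c_m/(n_2\Delta t)=S_2$, $(n_1+n_2)c_m=2N'$ with $N'\geq N$ to read off $\tau$. You instead prove the contrapositive: bounding the number of edges packable into a window of length $\tau'$ by integrating the trapezoidal envelope $c(t)\le\min\{S_1t,\,c_m,\,S_2(\tau'-t)\}$, optimizing over $c_m$ (whose maximizer is exactly the paper's triangle, so your optimization step is precisely the missing proof of the paper's trapezoid remark), and handling multi-wave attacks by summing the per-wave bound and using $\sum_j(\tau'_j)^2\le\bigl(\sum_j\tau'_j\bigr)^2$, which supplies the justification for the paper's unproved ``one burst and one drop is most efficient'' claim. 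What your approach buys is exactly these two optimality steps made airtight, plus an explicit statement of the modeling assumptions (negligible baseline, area $=N\Delta t$); what the paper's direct computation buys is brevity and the extra corollary $c_m\geq\sqrt{2N\Delta t\,S_1S_2/(S_1+S_2)}$ on the burst height, which it uses later to motivate the weighting in Eq.~(\ref{eqbeta}) and which also falls out of your optimizer if you care to state it. The obstacle you flag --- that ``no slope returned by Alg.~\ref{algmultiburst} exceeds $S_1$ or $S_2$'' must be translated into the pointwise envelope bound, since awakening and dying points are defined by chord distance rather than monotonicity --- is real, but the paper's own proof glosses over the identical point (``every point in the time series should be in line with the two auxiliary straight lines''), so it is not a gap of yours relative to the paper; your treatment is, if anything, the more careful of the two.
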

\begin{proof}
    The most efficient way to create $N$ edges is to have one
    burst and one drop, otherwise more time is needed.
    As shown in Fig.~\ref{figtimecost}, 
    in order to minimize the slope,
    every point in the time series should in line with the two 
    auxiliary straight lines to the highest point $c_m$, 
    separately from the awakening and dying points.
    Otherwise, the slopes will exceed the normal values 
    $S_1$ and $S_2$. 
    Hence we only consider the triangle with the auxiliary lines as its
    two edges.
    It is worth noticing that a trapezoid
    whose legs have the same slopes as the triangle's edges cannot have
    a shorter time cost.
Then 
    \begin{equation}
	\nonumber
	\frac{c_m}{n_1\Delta t}  =  S_1\text{,\qquad }
	\frac{c_m}{n_2\Delta t}  =  S_2\text{,\qquad } 
	(n_1+n_2) \cdot c_m  =  2N'.
    \end{equation}
Here $n_1$ and $n_2$ are the number of time bins before and after the burst.
$N'$ is the total number of rating edges, and $N'\geq N$ consider
some edges from normal users.
Thus, solving the above equations, we have
    \begin{equation}
	\nonumber
	\tau = (n_1+n_2)\Delta t = \sqrt{\frac{2N'\Delta t(S_1+S_2)}{S_1\cdot S_2}}
	\geq \sqrt{\frac{2N\Delta t(S_1+S_2)}{S_1\cdot S_2}}
    \end{equation}
\end{proof}
We also have the height of burst,
	$c_m$ $\geq$ $\sqrt{\frac{2N\Delta t S_1S_2}{S_1 + S_2}}$.
Thus, the maximum height of time series $\mathcal T$ cannot 
be larger by far than that of a normal sink node. That is the reason that we
use the weighted $\varphi_i$ in equation (\ref{eqbeta}) and weighted
drop slope in equation (\ref{eqfav}).

\subsubsection{Rating deviation and aggregation}
We now consider edges with categorical attributes such as
rating scores, text contents, etc.
For each sink node $v_i$, 
we use the KL-divergence $\kappa_i$ between the distributions
separately from the suspicious source nodes $A$ and 
the other nodes, i.e., $U \setminus A$.
We use $U\setminus A$ for KL-divergence 
instead of the whole source nodes $U$, 
in order to avoid the trivial
case where most of the rating scores are from $A$.
The rating deviation $\kappa_i$ is scaled into
$[0,1]$ by the maximum value 
before being passed into function 
$q(\cdot)$ to compute contrast suspiciousness.
The neutral scores can be ignored in the KL-divergence
for the purpose of detecting fraudulent boosting or defamation.
Moreover, rating deviation is meaningful 
when both $A$ and $U\setminus A$ 
have the comparable numbers of ratings.
Thus, we weighted $\kappa_i$ by a balance factor,
$\min\displaystyle\{$ $\displaystyle\sfrac{f_{A}(v_i)}{f_{U\setminus A}(v_i)}$, 
$\displaystyle\sfrac{f_{U\setminus A}(v_i)}{f_{A}(v_i)}$ $\displaystyle\}$.
 
To make holistic use of different signals, i.e.,
topology, temporal spikes, and rating deviation,
we need a way to aggregate those signals together.
We have tried to use RRF (Reciprocal Rank Fusion) scores 
from Information Retrieval, and wrapped the scores
with and without scaling function $q(x)$.
Compared to RRF score, we found 
that a natural way of joint
probability by multiplying those signals together: 
\begin{equation}
    \label{eqfinalp}
P(v_i|A)=b^{\alpha_i+\varphi_i+\kappa_i-3},
\end{equation}
was the most effective way to aggregate. In a joint 
probability, we can consider the absolute suspicious 
value of each signal, as opposite to the only use
of ranking order. Moreover, being wrapped with $q(x)$,
the signal values cannot be canceled out by multiplying
a very small value from other signals. A concrete example
is that a suspicious spike can still keep 
a high suspiciousness score by multiplying a 
very small score from low fraudulent density.

Moreover, HoloScope dynamically updates the contrast suspiciousness
$P(v_i|A)$. Thus the sink nodes being added with camouflage
will have a very low contrast suspiciousness, with respect to the 
suspicious source nodes $A$. This offers HoloScope the resistance to camouflage.

\subsection{Algorithm}

Before designing the full algorithm for large scale datasets,
we firstly introduce the most important sub-procedure $GreedyShaving$ in Alg.~\ref{algshaving}.
%and inflating procedure Alg~\ref{alginflate}.

At the beginning, this greedy shaving procedure 
starts with an initial set $A_0\subset U$ as input.
It then greedily deletes source nodes from $A$, according
to users' scores $\mathcal S$:
\begin{equation}
    \nonumber
    \mathcal S(u_j \in A) = \sum_{v_i:(u_j,v_i)\in E}\sigma{ji}\cdot e_{ji}\cdot P(v_i|A),
\end{equation}
which can be interpreted as how many suspicious nodes that
user $u_j$ is involved in.
So the user is less suspicious 
if he has a smaller score, with respect to the current 
contrast suspiciousness $\mathcal P$,
where we use $\mathcal P$ to denote a vector of contrast suspiciousness
of all sink nodes.
We build a priority tree to help us efficiently 
find the user with minimum score. 
The priority tree updates
the users' scores and maintains 
the new minimum as the priorities change.
With removing source nodes $A$, the contrast suspiciousness $\mathcal P$ change, 
in which we then update users' scores $\mathcal S$.
The algorithm keeps reducing $A$ until it is empty.
The best $A^*$ maximizing objective $HS$ and $P(v|A^*)$ 
are returned at the end.

\begin{algorithm}[tbhp]
\caption{$GreedyShaving$ Procedure.}
\small
\label{algshaving}
\begin{algorithmic}
    \STATE {\bf Given} bipartite multigraph $\mathcal{G}(U,V,E)$, \\
	\qquad initial source nodes $A_0 \subset U$.
    %\STATE {\bf Output} suspect source nodes $A^*\subset U$, \\
    %	\qquad target sink probabilities $P^*$.
    \STATE Initialize: 
    \bindent 
	\STATE $A=A_0$ \\ 
	\STATE $\mathcal P$= calculate contrast suspiciousness given $A_0$
	    %\mathbf I$, where $\mathbf I$ is an identity vector.
    \eindent
    \STATE $\mathcal S$ = calculate suspiciousness scores of source nodes $A$.
    \STATE $MT$ = build priority tree of $A$ with scores $\mathcal S$.
    \WHILE{$A$ is not empty}
	\STATE $u$ = pop the source node of the minimum score from $MT$.
	\STATE $A = A\setminus{u}$, delete $u$ from $A$.
	%\STATE Delete $u$ from priority tree $MT$.
	\STATE Update $\mathcal P$ with respect to new source nodes $A$.
	%\begin{ALC@g}
	%    \STATE\COMMENT{(takes $O(d_u \cdot (|A|-1))$ time)}
	%\end{ALC@g}
	\STATE Update $MT$ with respect to new  $\mathcal P$.
	%\begin{ALC@g}
	%    \STATE\COMMENT{(takes $O(d_u \cdot (|A|-1)\log|A_0|)$ time)}
	%\end{ALC@g}
	\STATE $HS$ = estimate objective as Equation (\ref{eqobj}).
	%\STATE If $HS > HS^*$, $A^*=A$, $P^*=P$ and $HS^*=HS$  
    \ENDWHILE
    \RETURN $A^*$ that maximizes objective $HS(A^*)$ and $P(v|A^*)$, $v\in V$.
\end{algorithmic}
\end{algorithm}

Since awakening and burst points have been already 
calculated for each sink node as an initial step 
before the $GreedyShaving$ procedure,  
the calculation of the contrast suspicious 
$P(v|A)$ for a sink node $v$ only needs $O(|A|)$ time.
With source node $j$ as the $j$-th one removed from $A_0$ by the
$GreedyShaving$ procedure, $|A_0|=m_0$, and the out degree 
as $d_i$, the complexity is
\begin{eqnarray}
    \label{eqcomplex}
    \sum_{j=2, \cdots, m_0}{O(d_j\cdot(j-1)\cdot \log{m_0})}= O(m_0|E_0|\log{m_0})
    %&=& \log{m_0} \sum_{j=2,\cdots, m_0}{O(d_i\cdot j)} \\
    %&=& O(m_0|E|\log{m_0})
\end{eqnarray}
where $E_0$ is the set of edges connected to source nodes $A_0$.

%\begin{algorithm}
%\caption{$GreedyInflating$ Procedure.}
%%\small
%\label{alginflate}
%\begin{algorithmic}
%    \STATE {\bf Given} bipartite multigraph $\mathcal{G}(U,V,E)$,\\
%	\qquad a queue of source nodes $\tilde U$ with specific order
%    %\STATE {\bf Output} suspect source nodes $A^*\subset U$, and
%    %	   target sink probabilities $P^*$.
%    \STATE Initialize $A=\emptyset$.
%    \WHILE{$\tilde U$ is not empty}
%	\STATE $u$ = pop the top source of $\tilde U$.
%	\STATE $A=\{A, u\}$, add $u$ to set $A$ 
%	\STATE Update sink probabilities $P$ with respect to source nodes $A$
%	\begin{ALC@g}
%	    \STATE\COMMENT{comments: $O(d_u\cdot|A|)$}
%	\end{ALC@g}
%	\STATE $HS$ = estimate objective as Equation (\ref{eqobj})
%    \ENDWHILE
%    \RETURN $A^*$ and $P^*$ that maximize objective $HS$.
%\end{algorithmic}
%\end{algorithm}
%
%With the size of initial queue $\tilde U$ as $m_0$, the complexity 
%is $O(m_0|E_0|)$. %$\sum_{i=1,\cdots,m_0}{O(d_i\cdot i)} = O(m_0|E_0|)$. 
%Without ambiguity, $E_0$ is used as the edges covered by source nodes $\tilde U$.
%%to sink nodes $V$.

With the $GreedyShaving$ procedure,
our scalable algorithm can be designed so as 
to generate candidate suspicious source node sets.
%especially without knowing their suspiciousness.
In our implementation, we use singular
vector decomposition (SVD) for our algorithm.
Each top singular vector 
gives a spectral view of high connectivity
communities. However, those singular vectors
are not associated with suspiciousness scores.
Thus combined with the top singular vectors, 
our fast greedy algorithm is given in Alg.~\ref{algfastgreedy}.

\begin{theorem}[Algorithm complexity]
    In the graph $\mathcal{G}(U,V,E)$,
    given $|V| = O(|U|)$ and $|E|=O(|U|^{\epsilon_0})$,
    the complexity of $FastGreedy$ algorithm is
    subquadratic, i.e., $o(|U|^2)$ in little-o notation,
    if the size of truncated user set $|\tilde U^{(k)}|$ $\leq$
    $|U|^{\sfrac{1}{\epsilon}}$, where $\epsilon>\max\{1.5, \frac{2}{3-\epsilon_0}\}$.
\end{theorem}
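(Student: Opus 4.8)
The plan is to decompose the running time of $FastGreedy$ into three pieces: (i) computing the top-$k$ singular vectors of the sparse weighted adjacency matrix $\mathbb M$; (ii) extracting, from each of these $k$ spectral views, a truncated candidate block -- a user set $\tilde U^{(k)}$ of size $m_0$ together with the sink nodes it touches; and (iii) running the $GreedyShaving$ procedure of Alg.~\ref{algshaving} on each of the $O(k)$ candidates. Since $k$ is a fixed small constant it only contributes a constant factor and can be dropped from the little-o estimate. For (i), a Lanczos-type or randomized SVD touches each nonzero a constant number of times, so it costs $O(|E|)=O(|U|^{\epsilon_0})$ up to polylogarithmic factors; step (ii) is linear in the size of the induced subgraph, hence again $O(|E|)$. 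Because $|E|\le|U|\cdot|V|=O(|U|^2)$ forces $\epsilon_0\le 2$ -- and $\epsilon_0=2$ already rules out subquadratic time just to read the input -- both of these pieces are $o(|U|^2)$, so the total cost is dominated by piece (iii).

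For (iii) I would start from the per-call bound already derived in Eq.~(\ref{eqcomplex}), namely $O(m_0\,|E_0|\,\log m_0)$, now instantiated with $m_0=|\tilde U^{(k)}|\le|U|^{\sfrac{1}{\epsilon}}$ (the truncation hypothesis) and $E_0$ the set of edges incident to the candidate user set. The quantity that must be controlled is $|E_0|$, for which two bounds are relevant. If the candidate block is dense -- the regime the spectral step is designed to surface -- then the number of incident edges is $O(m_0^2)$; if instead the candidate users carry roughly the global average degree $\sfrac{|E|}{|U|}=O(|U|^{\epsilon_0-1})$, then $|E_0|=O(m_0\,|U|^{\epsilon_0-1})$. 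Substituting the first bound into Eq.~(\ref{eqcomplex}) gives a per-call cost of order $m_0^3\log m_0=|U|^{\sfrac{3}{\epsilon}}\log|U|$, and substituting the second gives $m_0^2\,|U|^{\epsilon_0-1}\log m_0=|U|^{\sfrac{2}{\epsilon}+\epsilon_0-1}\log|U|$.

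Finally I would require that both of these are $o(|U|^2)$. Since the little-o has to absorb the $\log|U|$ factor, each exponent must be \emph{strictly} below $2$: $\sfrac{3}{\epsilon}<2$ forces $\epsilon>1.5$, while $\sfrac{2}{\epsilon}+\epsilon_0-1<2$ forces $\epsilon>\frac{2}{3-\epsilon_0}$ (positive since $\epsilon_0<2$). Imposing both constraints simultaneously yields exactly $\epsilon>\max\{1.5,\ \frac{2}{3-\epsilon_0}\}$, which is the hypothesis, and combining with pieces (i)--(ii) establishes that $FastGreedy$ runs in $o(|U|^2)$ time. The main obstacle -- and the point where the ``reasonable assumptions'' really bite -- is the estimate of $|E_0|$ for a spectrally-truncated candidate set: one has to argue that $|E_0|$ is governed either by the dense-block bound $O(m_0^2)$ or by the average-degree bound $O(m_0|U|^{\epsilon_0-1})$, and these two competing bounds are precisely what produce the two arguments of the $\max$; a minor but genuine subtlety is that, because of the logarithmic factors, the exponent inequalities must be strict, which matches the strict inequality assumed on $\epsilon$.
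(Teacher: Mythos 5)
Your proposal is correct and follows essentially the same route as the paper: it bounds the per-call cost via Eq.~(\ref{eqcomplex}) with $m_0=|\tilde U^{(k)}|\le|U|^{\sfrac{1}{\epsilon}}$, estimates $|E_0|$ by a dense block of $O(m_0^2)$ edges plus an average-density remainder $O(m_0|E|/|U|)$, and derives the two exponent constraints $\epsilon>1.5$ and $\epsilon>\frac{2}{3-\epsilon_0}$ exactly as the paper does (your explicit accounting of the SVD and truncation steps is a minor addition the paper leaves implicit).
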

\begin{proof}
    The $FastGreedy$ algorithm executes $GreedyShaving$ in a 
    constant iterations. 
    $A_0$ is assigned to $\tilde U^{(k)}$ in every $GreedyShaving$ procedure.
    Then $m_0$ $=$ $|A_0|$ $=$ $|\tilde U^{(k)}|$.
    In the adjacency matrix $\mathbb M$ of the graph, we consider the 
    submatrix $\mathbb M_0$ with $A_0$ as rows and $V$ as columns.
    If the fraudulent dense block is in submatrix $\mathbb{M}_0$,
    then we assume that the block has at most $O(m_0)$ columns. 
    Excluding the dense block, the remaining part of $\mathbb{M}_0$ is assumed
    to have the same density with the whole matrix $\mathbb M$.
    Therefore, the total number of edges in $\mathbb{M}_0$ is 
    \begin{equation}
	\nonumber
%	O(|E_0|) = O(m_0^2 + \frac{m_0\cdot |E|\cdot(|V|-m_0)}{|U|\cdot|V|})
%	 = O( |U|^{\sfrac{2}{\epsilon}} + |U|^{\sfrac{1}{\epsilon}-1} |E| -
%	 |U|^{\sfrac{2}{\epsilon}-2}|E| )
	O(|E_0|) = O(m_0^2 + \frac{m_0\cdot |E|}{|U|})
	 = O(|U|^{\sfrac{2}{\epsilon}} + |U|^{\sfrac{1}{\epsilon}-1} |E|)
    \end{equation}
    Then based on equation (\ref{eqcomplex}), the algorithm complexity is 
	%$O(m_0|E_0|\log m_0)$ $=$ $O((|U|^{\sfrac{3}{\epsilon}} + 
	%|U|^{\sfrac{2}{\epsilon}-1+\epsilon_0})\log|U|)$
    \begin{equation}
	\nonumber
	 O(m_0|E_0|\log m_0)   =   O((|U|^{\sfrac{3}{\epsilon}} + 
	|U|^{\sfrac{2}{\epsilon}-1+\epsilon_0})\log|U|) 
    \end{equation}

    Therefore, if $\epsilon>\max\{1.5, \frac{2}{3-\epsilon_0}\}$, then
    the complexity is subquadratic $o(|U|^2)$. 
\end{proof}
In real life graph, $\epsilon_0\leq 1.6$, so if $\epsilon>1.5$ the 
complexity of $FastGreedy$ algorithm is subquadratic.
Therefore, without loss of performance and efficiency,
we can limit $|\tilde U^{(k)}|\leq |U|^{\sfrac{1}{1.6}}$ for truncating 
an ordered $U$ in the $FastGreedy$ algorithm for a large dataset.

In $FastGreedy$ algorithm for HS-$\alpha$, SVD on 
adjacency matrix $\mathbb M$
is used to generate initial blocks
for the $GreedyShaving$ procedure. Although we can still
use SVD on $\mathbb M$ for HS with holistic attributes,
yet considering attributes of timestamps and rating scores
may bring more benefits.  
Observing that not every combination of \# of stars, timestamps and product ids
has a value in a multi-way tenor representation, 
we can only choose every existing
triplets ($object$, $timestamp$, $\#stars$) as
one column, and $user$ as rows, to form a new matrix.
The above transformation is called the $matricization$ of
a tensor, which outputs a new matrix.
With proper time bins, e.g., one hour or day,
and re-clustering of $\#stars$, the flattening matrix
becomes more dense and contains more attribute information.
Thus we use such a flattening matrix with each column 
weighted by the sudden-drop suspiciousness for our $FastGreedy$ 
algorithm.

\begin{algorithm}[htbp]
\caption{$FastGreedy$ Algorithm for Fraud detection.}
\small
\label{algfastgreedy}
\begin{algorithmic}
    \STATE {\bf Given} bipartite multigraph $\mathcal{G}(U,V,E)$.
    %\STATE {\bf Output} suspect source nodes $A^*\subset U$, and
    %	   the probabilities of being target sinks, $P^*$.
    \STATE $\mathbb L$ = get first several left singular vectors
    %\STATE $\Sigma$ = get first several singular values
    \FORALL{$L^{(k)} \in \mathbb{L}$} 
	\STATE Rank source nodes $U$ decreasingly on $L^{(k)}$ 
	\STATE $\tilde U^{(k)}$ = truncate $u\in U$ when $L^{(k)}_u \leq 
	\frac{1}{\sqrt{|U|}}$
	%\STATE $\tilde U^{(k)}$ = truncate $u\in U$ with $l^{(k)}_u \sqrt{s_k}
	%\geq \theta$.
	%\STATE $GreedyInflating$ on truncated queue $\tilde U^{(k)}$.
	\STATE  $GreedyShaving$ with initial $\tilde U^{(k)}$.
    \ENDFOR
    \RETURN the best $A^*$ with maximized objective $HS(A^*)$, \\
     \qquad and the rank of $v\in V$ by $f_{A^*}(v)\cdot P(v|A^*)$. 
\end{algorithmic}
\end{algorithm}

\section{Experiments}

\begin{table}[htbp]
  \centering
  \small
  \caption{Data Statistics}
    \begin{tabular}{|l|l|l|l|}
    \hline
    Data Name & {\#nodes} & {\#edges} & {time span} \\
    %\hline
    %synthetic data & 3K$ x $3K & 975K  & -- \\
    %\hline
    %Youtube & 1.13M x 1.13M & 2.99M & -- \\
    %\hline
    %Sina Weibo & 2.75M x 8.08M & 50.1M & Nov 2013 - Dec 2013 \\
    \hline
	BeerAdvocate~\cite{mcauley2013amateurs} & 26.5K x 50.8K & 1.07M & Jan 08 - Nov 11 \\
    \hline
	Yelp  & 686K x 85.3K & 2.68M & Oct 04 - Jul 16 \\
%    \hline
%	Amazon Android App & 1.32M x 61.3K & 2.64M & Jan 2011 - Jul 2011 \\
    \hline
	Amazon Phone \& Acc~\cite{mcauley2013hidden} & 2.26M x 329K & 3.45M & Jan 07 - Jul 14 \\
    \hline
	Amazon Electronics~\cite{mcauley2013hidden} & 4.20M x 476K & 7.82M & Dec 98 - Jul 14 \\
    \hline
	Amazon Grocery~\cite{mcauley2013hidden} & 763K x 165K & 1.29M & Jan 07 - Jul 14 \\
    \hline
	Amazon mix category~\cite{mukherjee2012spotting} & 1.08M x 726K & 2.72M & Jan 04 - Jun 06 \\
    \hline
    \end{tabular}%
  \label{tabdata}%
\end{table}%

% Table generated by Excel2LaTeX from sheet 'sumexp'
\begin{table*}[htbp]
  \centering
  \small
  \begin{threeparttable}
  \caption{Experimental results on real data with injected labels}
    \begin{tabular}{|l|l|c|c|c|c|c|c|c|c|}
    \hline
	\multirow{2}[4]{*}{Data Name} & \multirow{2}[4]{*}{metrics*} & \multicolumn{4}{c|}{source nodes}  & \multicolumn{4}{c|}{sink nodes} \\
	\cline{3-10}      &       & M-Zoom & D-Cube & CrossSpot & HS    & M-Zoom & D-Cube & CrossSpot & HS \\
	\hline
	\multirow{2}[4]{*}{BeerAdvocate} & auc   & 0.7280 & 0.7353 & 0.2259 & \textbf{0.9758} & 0.6221 & 0.6454 & 0.1295 & \textbf{0.9945} \\
	\cline{2-10}      & F$\geq$90\%  & 0.5000 & 0.5000 & --    & \textbf{0.0333} & 0.5000 & 0.5000 & --   & \textbf{0.0333} \\
	\hline
	\multirow{2}[4]{*}{Yelp} & auc   & 0.9019 & 0.9137 & 0.9916 & \textbf{0.9925} & 0.9709 & 0.8863 & 0.0415 & \textbf{0.9950} \\
	\cline{2-10}      & F$\geq$90\%  & 0.2500 & 0.2000 & 0.0200 & \textbf{0.0143} & 0.0250 & 1.0000 & --    & \textbf{0.0100} \\
	\hline
	{Amazon} & auc   & 0.9246 & 0.8042 & 0.0169 & \textbf{0.9691} & 0.9279 & 0.8810 & 0.0515 & \textbf{0.9823} \\
	\cline{2-10} Phone \& Acc & F$\geq$90\%  & 0.1667 & 0.5000 & --  & \textbf{0.0200}$^{\dagger}$ & 0.1429 & 0.1000 & -- & \textbf{0.0200}$^{\dagger}$ \\
	\hline
	{Amazon} & auc   & 0.9141 & 0.9117 & 0.0009 & \textbf{0.9250} & 0.9142 & 0.7868 & 0.0301 & \textbf{0.9385} \\
	\cline{2-10} Electronics & F$\geq$90\%  & 0.2000 & 0.1250 & --    & \textbf{0.1000} & \textbf{0.1000} & 0.5000 & --    & 0.1250 \\
	\hline
	{Amazon} & auc   & 0.8998 & 0.8428 & 0.0058 & \textbf{0.9250} & 0.8756 & 0.8241 & 0.0200 & \textbf{0.9621} \\
	\cline{2-10} Grocery  & F$\geq$90\%  & 0.1667 & 0.5000 & --    & \textbf{0.1000} & 0.1250 & 0.2500 & --    & \textbf{0.1000} \\
	\hline
	{Amazon} & auc   & 0.9001 & 0.8490 & 0.5747 & \textbf{0.9922} & 0.9937 & 0.9346 & 0.0157 & \textbf{0.9950} \\
	\cline{2-10}mix category  & F$\geq$90\%  & 0.2500 & 0.5000 & 0.2000$^{\dagger}$  & \textbf{0.0167} & \textbf{0.0100} & 0.2000 & --    & \textbf{0.0100} \\
    \hline
    \end{tabular}%
    \label{tbexp}%
    \begin{tablenotes}
    \footnotesize
    \item * we use the two metrics: the area under the curve (abbrev as low-case ``auc'') 
	of the accuracy curve as drawn in Fig.~\ref{sfigallcmp}, and the lowest
	$detection$ density that the method can detect in high accuracy($\geq 90\%$).
	\item $^\dagger$ one of the above fraudulent density was not detected in high accuracy.
    \end{tablenotes}
  \end{threeparttable}
\end{table*}%

In the experiments, we only consider the significant
multiple bursts for fluctuated time series of sink nodes.
We keep those awakening-burst point pairs with the 
altitude difference $\Delta c$ at least 50\% of 
the largest altitude difference in the time series.
Table~\ref{tabdata} gives the statistics of our 
six datasets which are publicly available for 
academic research~\footnote{Yelp dataset is from
\url{https://www.yelp.com/dataset_challenge} }.
%including BeerAdvocate data~\cite{mcauley2013amateurs}, 
%Yelp data~\footnote{Yelp dataset is from \url{https://www.yelp.com/dataset_challenge} },
%Amazon review data in categories~\cite{mcauley2013hidden},
%and Amazon reviews with mixed categories~\cite{mukherjee2012spotting}.
Our extensive experiments showed that
the performance was insensitive to scaling base $b$, 
and became very stable when larger than 32.
Hence we choose $b=32$ in the following experiments.

%\subsection{Parameter sensitivity}
%As we have theoretically analyzed the scaling base $b$ of
%function $q(x)$, we test the sensitivity of $b$ on two
%data sets with different fraudulent density, choosing between
%$b=$ 2, $e$, 8, 16, 32, 256. The results are shown in
%Fig.~\ref{figsense}. 
%Generally speaking, our performance is not sensitive
%to $b$, as the empirical results shows that around 
%5\% difference between the choices of $b$.
%As $b$ increases, the performance quickly 
%becomes stable after $b=8$, 
%which agrees with the theoretical analysis that
%larger $b$ helps in shaving unsuspicious users. 
%In our experiments, we use $b=32$.
%
%\begin{figure}
%    \centering
%	\includegraphics[height=1.4in]{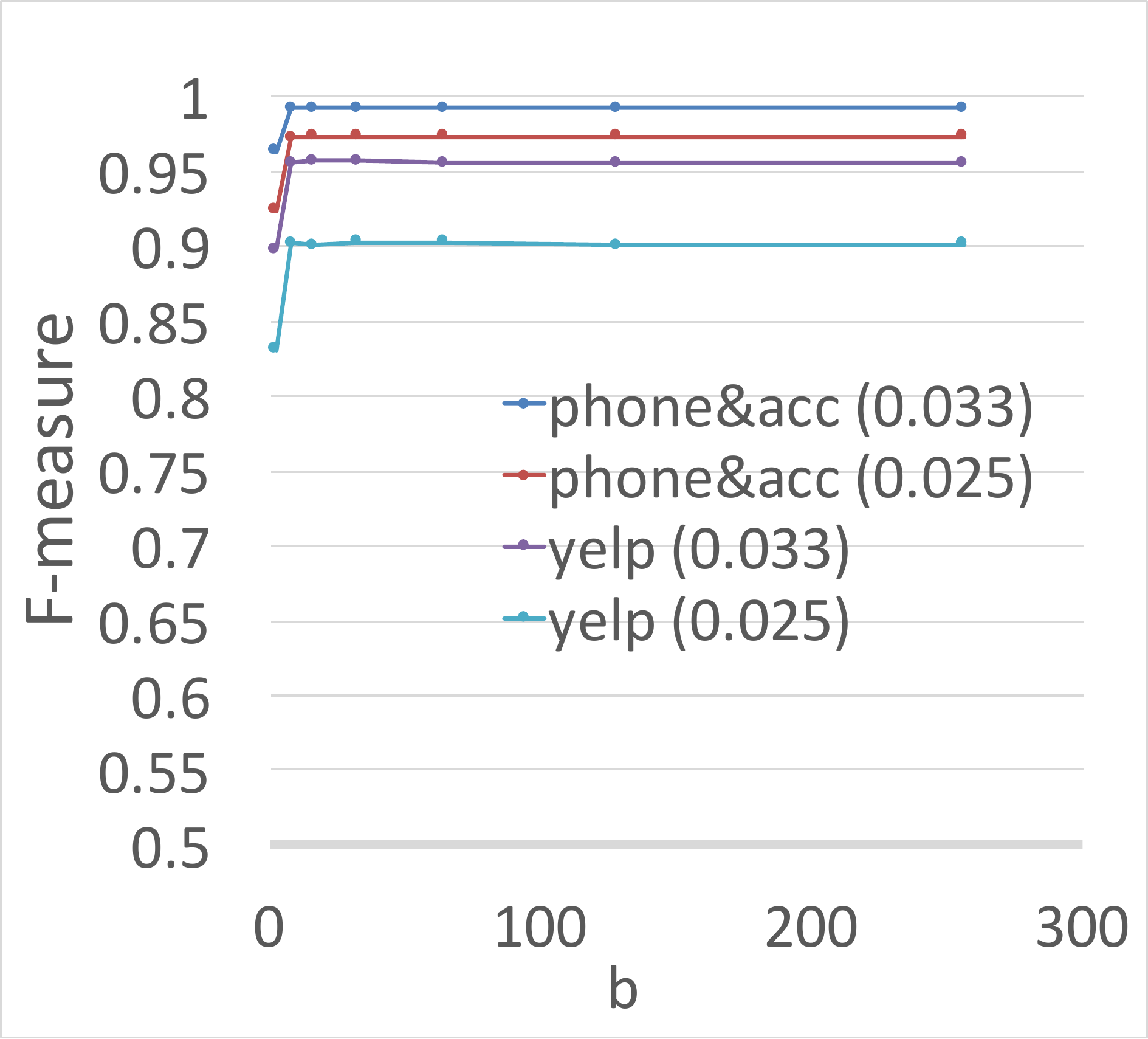}
%	\caption{HS quickly becomes stable.}
%	\label{figsense}
%\end{figure}

\subsection{Evaluation on different injection density}
%: one is the synthetic data
%generate by our code, shown in Fig~\ref{sfighyper};
In the experiments, we mimic the fraudsters'
behaviors and randomly choose 200 objects 
that has no more than 100 indegree 
as the fraudsters' customers,
since less popular objects are more likely to 
buy fake ratings.
On the other hand, the fraudulent accounts 
can come from the hijacked user accounts.
Thus we can uniformly sample out a number of users 
from the whole user set as fraudsters.
To test on different fraudulent density,
the number of sampled fraudsters ranges from 200 
to 20000. 
Those fraudsters as a whole randomly 
rate each of the 200 products for 200 times, 
and also create some biased camouflage on other products.
As a results, the fraudulent density ranges from 1.0 to 0.01
for testing.
The rating time was generated for each fraudulent edge:
first randomly choosing a start time between
the earliest and the latest creation time of the existing edges; 
and then plus a randomly and biased time interval 
from the intervals of exiting creation time, 
to mimic the surge of fraudsters' attacks.
Besides, a high rating score, e.g. 4 or 4.5,
is randomly chosen for each fraudulent edge\footnote{
the injection code is also open-sourced for reproducibility}.

%data:
%\begin{itemize}
% \item Synthetic data with hyperbolic block
% \item Microblog from weibo.com.
% \item youtube social network from SNAP.  In the Youtube social network, users form friendship each other and users can create groups which other users can join. 
% \item BeerAdvocate rating and time from SNAP
% \item Amozon categorial review data from UCSD
%    Apps for Android,
%    Cell Phones and Accessories,
%    Electronics,
%    Grocery and Gourmet Food :edge 1287466, users: 
% \item Amozon review data from UIC
%\end{itemize}

Fig.~\ref{sfigallcmp} shows the 
results of HoloScope HS on the BeerAdvocate data.
When the fraudulent density decreases from the right to
the left along the horizontal axis, HS can 
keep as high F-measure on accuracy as more than 80\%
before reaching 0.025 in density, better by far
than the competitors. 
Since HS returns suspiciousness scores for sink nodes, 
we measure their accuracy using AUC (the area under the ROC curve),
where ROC stands for receiver operating characteristic.
For the detection of suspicious objects,
HS achieved more than 0.95 in AUC for all the testing 
injection density, with a majority of tests reaching to 1.0.

In order to give a comparison on all six data sets
with different injection density, we propose to use 
the two metrics: a low-case ``auc'' and the lowest
$detection$ density, 
described in the notes of Table~\ref{tbexp}.
The table reports the fraud detection 
results of our HoloScope (HS) and competitors 
on the six datasets. 
%We use the AUC of the accuracy curve as one of the metrics.
%The accuracy is measured with F-measure in most cases,
%while we use the AUC of the ROC curve 
%for the results of suspicious objects by the HS, 
%since the HS outputs the suspiciousness scores for sink nodes. 
Since the accuracy curve stops at 0.01 (the minimum testing density); 
and we add zero accuracy at zero density, 
the ideal value of auc is 0.995.
The auc on source and sink nodes are reported separately.
As the table suggests, our HS achieved the best auc
among the competitors, and even reached the ideal
auc in two cases.
Since the HS outputs the suspiciousness scores for sink nodes, 
we used the area under the upper-case AUC (similar to F-measure) accuracy
curve along all testing density.
Although it is sometimes unfair to compare 
F-measure with AUC, since the smallest value of AUC is around 0.5,
yet the high auc values can indicate the high F-measure values
on our top suspicious list.

Furthermore, we compare the lowest $detection$
density in Table~\ref{tbexp}.
%i.e. the lowest density for which they achieve accuracy of $\geq 90\%$. 
The better a method is, the lower density 
it should be able to detect well.
As we can see, HS has the smallest detection density
in most cases, which can be as small as $\sfrac{200}{14000}$$=$
0.0143 on source nodes, and reached the minimum testing 
density of 0.01 on sink nodes. That means we can detect fraudsters in high
accuracy even if they use 14 thousand accounts to create
200 $\times$ 200 fake edges for 200 objects. 
The fraudulent objects
can also be detected accurately, even if 20 thousand 
fraudsters are hired to create 200 fake edges for 
each object.

\vspace{-0.2cm}
\subsection{Evaluation on Sina Weibo with real labels}
We also did experiments on a large real dataset from Sina Weibo,
which has 2.75 million users, 8.08 million messages,
and 50.1 million edges in Dec 2013. 
The user names and ids, and message ids are from the online system. 
Thus we can check their existence status in the system to evaluate the
experiments. If the messages or the users were deleted 
from the system, we treat them as the basis 
for identifying suspicious users and messages.
Since it is impossible to check all of the users and messages,
we firstly collected a candidate set, which is the union of
the output sets from the HS and the baseline methods.
The real labels are from the candidate set by checking
the status whether they still exists in 
Sina Weibo (checked in Feb. 2017). We used a program on
the API service of Sina Weibo to check the candidate user and message id lists,
finally resulting in 3957 labeled users and 1615 
labeled messages.

The experimental results in Fig.~\ref{sfigwbexp} show
that HS achieved high F-measure on accuracy, which
detected 3781 labeled users higher than M-Zoom's 
1963 labeled users. The F-measure of HS improved
about 30\% and 60\%, compared with M-Zoom and 
D-Cube respectively.
CrossSpot biased to include a large amount of users in
their detection results, which detected more
than 100 labeled users but with extremely low precision, 
i.e., less than 1\%. That is the reason CrossSpot
got the lowest F-measure, which is less than 1.5\%.
For labeled messages, the HS achieved around 0.8
in AUC, while M-Zoom and D-Cube got lower recall,
and CrossSpot still 
suffered very low F-measure with higher recall. 
Therefore, our HoloScope outperformed 
the competitors in real-labeled data as well.
 
    %Sina Weibo & 2.75M x 8.08M & 50.1M & Nov 2013 - Dec 2013 \\
\subsection{Scalability}
To verify the complexity, 
we choose two representative datasets: BeerAdvocate data
which has the highest volume density, and Amazon Electronics
which has the most edges.
We truncated the two datasets according to different
time ranges, i.e., from the past 3 months, 6 months, or several years
to the last day, so that the generated data size increases.
Our algorithm is implemented in Python.
As shown in Fig.~\ref{figeffe}, 
the running time of our algorithm increases almost linearly with the number of the edges.

%We split the dataset BeerAdvocate according to different
%time ranges, from Jan 2012, Jan 2011, $\cdots$, %Jan 2008, Jan 2006, 
%Oct 2004 to Nov 2011. We then ran our algorithm on
%each resulting dataset. 
%Another larger dataset Amazon Electronics was also 
%tested by different time, ranges from Dec 1998, Jan 2003,
%Jan 2005, $\cdots$, Mar and Jun 2014 to Jul 2014.
%Our algorithm is implemented in Python.
%As shown in Fig.~\ref{figeffe}, 
%the running time of our algorithm increases almost linearly with the number of the edges.

\section{Conclusion}
We proposed a fraud detection method, HoloScope,
on a bipartite graph which can have timestamps
and rating scores. HoloScope has the following advantages:
{1)~\textbf{Unification of signals:}} we make holistic
	use of several signals, namely topology, 
	temporal spikes, and rating deviation
	in our suspiciousness framework in a systematic way.
{2)~\textbf{Theoretical analysis of fraudsters' obstruction:}}
	we showed that if the fraudsters use less than an upper bound
	of time to rate an object, they will 
	cause a suspicious drop or burst. In other words, our
	HoloScope can obstruct fraudsters and increases their time cost.
{3)~\textbf{Effectiveness:}} we achieved higher accuracy 
	on both semi-real and real datasets than the competitors, 
	achieving good accuracy even when the fraudulent density is low.
{4)~\textbf{Scalability:}} while HoloScope needs to dynamically
	update the suspiciousness of objects, the algorithm is
	sub-quadratic in the number of nodes, under reasonable assumptions.

\bibliographystyle{ACM-Reference-Format}
\balance
\bibliography{frauddetect}

\end{document}